\newtheorem{theorem}{Theorem}
\newtheorem{proposition}[theorem]{Proposition}
\newtheorem{lemma}[theorem]{Lemma}
\newtheorem{definition}[theorem]{Definition}
\newcommand{\cons}{\ensuremath{\textrm{cons}}\xspace}
\newcommand{\rob}{\ensuremath{\textrm{rob}}\xspace}
\newcommand{\cost}{\ensuremath{\textrm{cost}}\xspace}
\newcommand{\E}{\bm{\mathbb{E}}}
\author[1]{Spyros Angelopoulos}
\author[2]{Bertrand Simon}
\affil[1]{CNRS and International Laboratory on Learning Systems, Montreal, Canada}
\affil[2]{Université Grenoble Alpes, CNRS, INRIA, Grenoble INP, LIG, Grenoble, France}
\title{Learning-Augmented Online Bidding in Stochastic Settings}
\begin{document}

\date{}
\maketitle

\begin{abstract}
Online bidding is a classic optimization problem, with several applications in online decision-making, the design of interruptible systems, and the analysis of approximation algorithms. In this work, we study online bidding under learning-augmented settings that incorporate stochasticity, in either the prediction oracle or the algorithm itself. In the first part, we study bidding under distributional predictions, and find Pareto-optimal algorithms that offer the best-possible tradeoff between the consistency and the robustness of the algorithm. In the second part, we study the power and limitations of randomized bidding algorithms, by presenting upper and lower bounds on the consistency/robustness tradeoffs. Previous works focused predominantly on oracles that do not leverage stochastic information on the quality of the prediction, and deterministic algorithms.  
\end{abstract}

\section{Introduction}
\label{sec:introduction}

Recent advancements in machine learning have led to the development of powerful tools for efficiently learning patterns in various types of data. These advances spearheaded a novel computational framework, in which the algorithm designer has the ability to integrate a {\em prediction} oracle into the algorithm’s design, theoretical analysis, and empirical evaluation. This paradigm shift has led to the emergence of the field of {\em  learning-augmented} algorithms, which aims to leverage ML approaches towards the development of more efficient algorithms with data-driven capabilities. 

Learning-augmented algorithms have witnessed significant growth in recent years, starting with the seminal works~\cite{DBLP:journals/jacm/LykourisV21} and~\cite{NIPS2018_8174}. They have been particularly impactful in the context of sequential and online decision making, in which the algorithm must act on incoming pieces of the input, or adapt its strategy judiciously at appropriately chosen time steps. The defining characteristic of such settings is that the algorithm operates in a state of incomplete information about the input, thus making the application of ML techniques particularly appealing.
Specifically, they allow for a more nuanced, and beyond the worst-case performance evaluation than the standard, and often overly pessimistic approaches such as {\em competitive analysis}~\cite{borodin1998online}.

In this work, we focus on a simple, yet important problem of incomplete information, known as {\em online bidding}. In this problem, faced with some unknown {\em target} (or ``threshold'') $u$, a player submits a sequence of bids until one is greater than or equal to $u$, paying the sum of its bids up to that point. Formally, the objective is to find an increasing sequence $X=(x_i)_{i \geq 0}$ of positive numbers of minimum {\em competitive ratio}, defined as
\begin{equation}
r(X)=\sup_{u \geq 1}\frac{\textrm{cost}(X,u)}{u},
\quad \textrm{where } \ 
\textrm{cost}(X,u)=\sum_{j=0}^i x_j:x_{i-1}<u\leq x_i.
\label{eq:cr.bidding}
\end{equation}
Despite its seeming simplicity, online bidding has many applications in optimization problems such as incremental clustering and online $k$-median~\cite{charikar1997incremental}, load balancing~\cite{azar2005line}, searching for a hidden target in the infinite line~\cite{beck1970yet}, and the design of algorithms with interruptible capabilities~\cite{RZ.1991.composing}; we refer to the survey~\cite{chrobak2006sigact} for a detailed discussion of several applications. 
As a concrete application, consider the setting in which a user must submit a job for processing in a cloud service: here, the bidding strategy defines the increasing walltimes with which the user submits the job~\cite{reservationstrategies}.
This is because the problem abstracts the process of finding increasingly better estimates of optimal solutions to both online and NP-hard optimization problems, a process that is informally known as ``guess and double''. It is thus not surprising that online bidding and related competitive {\em sequencing} problems were among the earliest studied with a learning-augmented lens. We discuss several  related results in Section~\ref{subsec:related}.

Previous studies of online bidding focused on deterministic settings,  i.e., the bidding sequence is output by a deterministic process and, likewise, on deterministic prediction oracles, i.e., oracles that do not provide any probabilistic information on the quality of the prediction. In contrast, very little is known about bidding under {\em stochastic} settings. There are two main motivating reasons:

\medskip
\noindent
$\bullet$ \
In many realistic applications, the prediction has an inherently distributional nature~\cite{diakonikolas2021learning}. For instance, in the related problem of {\em contract scheduling}~\cite{DBLP:conf/ijcai/0001BD024}, the system designer may rely on available historical data about the estimated interruption time (i.e., the target, in the bidding terminology). Hence the question: {\em Can we find strategies that efficiently leverage a stochastic prediction oracle, while remaining robust against adversarial predictions?} 

\medskip
\noindent
$\bullet$ \
It is known that randomization can help improve the competitive ratio of standard online bidding~\cite{chrobak2008incremental}. Randomized algorithms also have additional benefits: they remain much more robust to small prediction errors, and tend to be much less {\em brittle} than deterministic algorithms, as  demonstrated in~\cite{elenter2024overcoming,benomar2025tradeoffs}. 
Hence the question: {\em What is the power and the limitations of randomized, learning-augmented bidding algorithms?}


\subsection{Contribution}
\label{subsec:contributions}

Motivated by the discussion above, in this work we study online bidding under two distinct stochastic settings. First, in Section~\ref{sec:pareto} we consider stochastic prediction oracles that  provide distributional information on the target. We show how to compute a strategy that attains the best-possible tradeoff between the {\em consistency} (informally, the expected performance assuming that the target is generated according to the predicted distribution) and the {\em robustness} (i.e., the performance assuming an adversarial prediction). Using the terminology of learning-augmented computation, we obtain a provably {\em Pareto-optimal} strategy. Previous work only addressed the case in which the robustness is equal to the best-possible competitive ratio, namely $r=4$~\cite{DBLP:conf/ijcai/0001BD024}, a case for which optimal strategies have a very simple structure. In our work, instead, we compute the entire Pareto frontier between the consistency and the robustness, which is a much more challenging task since optimal strategies have a complex structure and no clear characterization. Indeed, unlike the deterministic setting where simple geometric strategies suffice, in the stochastic setting such strategies can be extremely inefficient even on trivial distributions. Hence one needs to search a much larger space of candidate strategies which have no obvious structure.
To get around this difficulty, we find a suitable ``partial'' strategy, i.e., one whose bids are within the support of the prediction, minimizes the expected cost, and can be extended to a fully robust, infinite strategy. 
This can be formulated as a family of linear programs (LPs), whose solution yields the optimal strategy. 
We also provide a computationally efficient approximation of the Pareto front, with provable performance and run-time guarantees.

Our second main contribution, discussed in Section~\ref{sec:randomized}, is the  study of randomized bidding strategies. Here, in order to allow a comparison with the known results, all of which have focused on deterministic learning-augmented strategies, we assume a single-valued oracle that predicts the target's value. We first present and analyze a randomized strategy whose consistency/robustness tradeoff provably improves upon the best deterministic one. However, this improvement rapidly dissipates as a function of the robustness requirement $r$. We complement this result with a lower bound that applies to all randomized algorithms, and which establishes that as $r$ increases, the gap between the deterministic and the randomized consistency rapidly diminishes. Randomization poses new challenges, particularly in the context of negative results: it is not obvious how to show lower bounds on randomized tradeoffs, unlike traditional  competitive analysis where Yao's principle~\cite{yao1977probabilistic} is readily applicable, and unlike the  deterministic setting in which $r$-robust algorithms have a very clear characterization. Our approach relies on the use of {\em scalarization}, in that we prove a lower bound on a weighted combination of the two objectives. This is turn allows us to obtain a negative result on their tradeoff, via the minimax theorem (or Yao's principle)~\cite{yao1977probabilistic}.

In Section~\ref{sec:applications}, we extend our approaches to other settings and problems. First, we show how our LP-based approach can be applied to a {\em dynamic} setting in which the oracle provides updated predictions in an on-line fashion, and the (robust) learning-augmented algorithm must maintain optimal consistency relative to each incoming prediction. To our knowledge, this is the first study of competitive sequencing problems under dynamic predictions. Moreover, we show that our results can be extended to the problem of searching for hidden target in the infinite line, which is a well-known problem from search theory and operations research (see, e.g., Chapter 8 in~\cite{searchgames}), and extend previous works that did not address robustness issues.  We conclude, in Section~\ref{sec:exp}, with an experimental evaluation of our Pareto-optimal algorithm which demonstrates the performance improvements in practice.

\subsection{Related work}
\label{subsec:related}

\paragraph{Learning-augmented algorithms.}
Algorithms with predictions have been studied in a large variety of online problems, such as rent-and-buy problems~\cite{DBLP:conf/icml/GollapudiP19}, 
scheduling~\cite{lattanzi2020online}, caching~\cite{DBLP:journals/jacm/LykourisV21}, matching~\cite{antoniadis2020secretary}, packing~\cite{im2021online}, covering~\cite{bamas2020primal} and optimal stopping~\cite{dutting2024secretaries}.
 This paradigm has also applications outside online computation, e.g., in improving the time complexity of sorting~\cite{bai2023sorting}, graph problems~\cite{azar2022online}, and data structures~\cite{lin2022learning}. The above are only some representative works and we refer to the online repository~\cite{predictionslist} for a comprehensive listing. We note that the vast majority of these works do not assume stochastic oracles. The distributional prediction framework was introduced in~\cite{diakonikolas2021learning}, and has been further applied in the context of search trees~\cite{DBLP:conf/nips/DinitzILMNV24}, online matching~\cite{canonne2025little},
 and 1-max search~\cite{benomar2025pareto}.

\paragraph{Pareto-optimal algorithms.} \ 
Several studies have focused on the consistency-robustness trade-offs of deterministic learning-augmented algorithms e.g~\cite{sun_pareto-optimal_2021,DBLP:conf/eenergy/Lee0HL24,DBLP:journals/iandc/Angelopoulos23,bamas2020primal,DBLP:conf/aistats/ChristiansonSW23,almanza2021online}. 
The study of randomized Pareto-optimality has been limited to problems such as ski rental~\cite{wei2020optimal} and search games~\cite{angelopoulos2024search}.
For some problems, including online bidding, deterministic Pareto-optimal solutions suffer from brittleness, in that their performance shows marked degradation for very small prediction error~\cite{elenter2024overcoming}. A randomized smoothening can help mitigate brittleness, as shown in~\cite{benomar2025tradeoffs}, but their results still compare to  deterministic algorithms. In contrast, in our randomized setting, we compare against randomized algorithms instead of deterministic ones.

\paragraph{Online bidding and competitive sequencing problems.} \ For deterministic strategies, a folklore result that goes back to studies of linear search~\cite{beck1970yet} shows that the optimal competitive ratio is equal to 4, and is achieved by a simple strategy of the form $X=(2^i)_{i=0}^\infty$. Randomization can help improve the competitive ratio to $e$, whose tightness was proven in~\cite{chrobak2008incremental} using a complex application of dual-fitting  to the LP formulation. In regards to learning-augmented algorithms, and single-valued deterministic oracles,~\cite{angelopoulos2024online} gave Pareto-optimal algorithms, whereas~\cite{anand2021regression, im2023online} gave performance guarantees as a function of the prediction error.
Concerning distributional oracles, the only related previous work is~\cite{DBLP:conf/ijcai/0001BD024}, which obtained the best consistency of $4$-robust strategies. Note that in these studies, Pareto-optimality is attained by simple, geometric strategies of the form $X=(\lambda b^i)_{i\geq 0}$, where $b$ is chosen appropriately as a function of the required robustness $r$. Related competitive sequencing problems such as searching on an infinite line were studied in~\cite{DBLP:journals/iandc/Angelopoulos23}. No results are known about randomization in learning-augmented bidding, even for single-valued oracles.

\section{Pareto-optimal bidding with distributional predictions}
\label{sec:pareto}

In this section, we present and analyze deterministic Pareto-optimal algorithms in the distributional prediction model. We begin with the setting in which 
the prediction $\mu$ is a discrete distribution supported on $k$ points, where point $i$ is defined as  $(p_i,\mu_i)$, with $\mu_i \in \mathbb{R}_{\geq 1}$, such that $\sum_{i=1}^k p_i=1$. We assume, without loss of generality, that $\mu_i \leq \mu_j$, for all $i<j$. At the end of the section, we will show how to extend the result to general distributions; we will also discuss time-complexity issues.

The main challenge we face is that, unlike the deterministic setting where simple geometric strategies are optimal, in the stochastic setting such strategies can be extremely inefficient even on trivial distributions, as we show in Proposition~\ref{lem:geom_adv}, Appendix~\ref{app:exp}. Hence one needs to search a much larger space of candidate strategies which have no obvious structure. 
 Our approach consists in restricting the search space to \emph{tight $r$-extendable partial strategies} (Lemma~\ref{lemma:aggr.extendable}), which are specific finite strategies restricted to the domain of $\mu$ that can be extended to full strategies. We can use linear programs to find the best partial strategy, by defining the concept of solution \emph{configurations} (see~\ref{def:configuration}), and finding the best solution satisfying any given configuration through a linear program.

Given a distributional prediction $\mu$ and a strategy $X$, we define the {\em consistency} of $X$ as
\begin{equation}
\cons(X,\mu) = \frac{\E_{z \sim \mu}[\textrm{cost}(X,z)]}{\E_{z \sim \mu}[z]},
\label{eq:cons.distr}
\end{equation}
following~\cite{diakonikolas2021learning} and~\cite{DBLP:conf/ijcai/0001BD024}. The robustness of $X$ is the same as its worst-case competitive ratio, and is thus given by~\eqref{eq:cr.bidding}. By observing that the competitive ratio is maximized for targets which are infinitesimally larger than any of the bids, a strategy $X$ is $r$-robust if and only if it satisfies
\begin{equation}
x_{i+1} \leq r\cdot x_i - \sum_{j\leq i} x_j, 
\ \textrm{for all } i\geq 0.
\label{eq:robust}
\end{equation}
Hence our problem is  formulated as follows: Given a robustness requirement $r \in \mathbb{N^+}$, and prediction $\mu$, we seek an $r$-robust strategy $X$ that minimizes $\cons(X,\mu)$. We will assume that $r$ is constant, independent of other parameters.

The following definition introduces the concept of a {\em configuration}, which is central in the design and analysis of our strategy, in that it allows us to express the requirements of the problem using linear constraints. 
\begin{definition}
Given a strategy $X$ and a prediction $\mu$, we 
define the {\em configuration} of $X$ according to $\mu$ as the vector $(j_1,\ldots ,j_k) \in \mathbb N^k$ such that
\begin{equation}
x_{j_i} < \mu_i \leq x_{j_i+1},
\ \text{ for all $i  \in [1,k]$}.
\end{equation}
Furthermore, we say that the vector $\bf{j}\in \mathbb{N}^k$ is a {\em valid} configuration, relative to a robustness requirement $r$ and prediction $\mu$, if there exists an $r$-robust strategy $X$ with configuration $\bf{j}$.

\label{def:configuration}
\end{definition}
The number of valid configurations is bounded from above according to the following proposition.

\begin{proposition}
The number of valid configurations of an $r$-robust strategy is
$O(\log^k \mu_k)$.
\label{prop:configs}
\end{proposition}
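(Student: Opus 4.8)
The plan is to bound the number of valid configurations by arguing that each coordinate $j_i$ of a valid configuration can only range over a small set of integers, and that the vector is essentially monotone, so that the total count is controlled by a product of per-coordinate ranges. The key observation is that any $r$-robust strategy grows geometrically in a controlled way: from the robustness constraint~\eqref{eq:robust} together with the requirement that the bids are increasing and positive, one can extract both a lower and an upper bound on the growth rate of the $x_i$. Intuitively, the smallest $r$-robust strategy still forces the bids to increase by at least some constant factor $c>1$ per step (otherwise the partial-sum subtraction in~\eqref{eq:robust} would make the sequence non-increasing), which means that after roughly $\log_c \mu_i$ steps the strategy must already exceed $\mu_i$.

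\medskip

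\noindent
First I would prove a growth lemma: for any $r$-robust strategy $X=(x_i)$ with $x_0 \geq 1$, there is a constant $c=c(r)>1$ such that $x_{i+1} \geq c\, x_i$ for all $i$, and hence $x_i \geq c^i x_0$. This is the crucial step, since it caps how large an index $j_i$ can be while still having $x_{j_i} < \mu_i$: we get $j_i = O(\log_c \mu_i) = O(\log \mu_i)$, treating $r$ (and hence $c$) as constant. Because $\mu_1 \le \cdots \le \mu_k \le \mu_k$, every coordinate satisfies $0 \le j_i = O(\log \mu_k)$. I would be careful here to confirm that the geometric lower bound on the growth rate really follows from~\eqref{eq:robust}; a convenient route is to note that any valid $r$-robust strategy is pointwise at least the minimal $r$-robust strategy (the one attaining equality in~\eqref{eq:robust}), and that the minimal strategy is itself geometric with ratio $c = c(r)$, which can be read off from the characteristic equation of the induced linear recurrence.

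\medskip

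\noindent
Next, rather than naively multiplying $k$ independent ranges of size $O(\log \mu_k)$ — which would give $O(\log^k \mu_k)$ only after also observing monotonicity prevents overcounting — I would exploit that the configuration is nondecreasing: since $\mu_i \le \mu_{i+1}$ and the $x_j$ are increasing, we have $j_1 \le j_2 \le \cdots \le j_k$. A valid configuration is therefore a nondecreasing sequence of $k$ integers, each lying in the range $[0, J]$ with $J = O(\log \mu_k)$. The number of such nondecreasing sequences is $\binom{J+k}{k}$, and treating $k$ as a constant (as the problem setup does, since the prediction is supported on a fixed number $k$ of points) this is $O(J^k) = O(\log^k \mu_k)$, which is exactly the claimed bound.

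\medskip

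\noindent
The main obstacle I expect is establishing the geometric growth lemma rigorously, i.e.\ pinning down the constant $c(r)>1$ and verifying that the minimal $r$-robust strategy really does grow geometrically rather than, say, sub-geometrically near the start. The recurrence in~\eqref{eq:robust} couples $x_{i+1}$ to the entire prefix sum $\sum_{j \le i} x_j$, so it is not an immediate first-order geometric recurrence; the cleanest fix is to rewrite it by subtracting consecutive instances of the equality case to eliminate the running sum, yielding a clean two-term recurrence $x_{i+1} = (r-1)x_i - (r-1)x_{i-1}$ or similar, whose dominant root gives $c(r)$. Once the growth rate is secured, the counting argument is routine. A secondary subtlety is whether $j_i$ should be allowed to equal the index $\infty$ or be left undefined when $\mu_i$ exceeds all bids of a finite partial strategy; but since we restrict to full (infinite) $r$-robust strategies in this proposition, every $\mu_i$ lies between two consecutive bids and each $j_i$ is a well-defined finite integer, so this does not arise.
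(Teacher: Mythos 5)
Your overall plan --- bound each coordinate $j_i$ by $O(\log \mu_i)$ and then count --- is exactly the paper's, and your counting step is fine (in fact, the naive product of $k$ ranges of size $O(\log \mu_k)$ already gives the stated bound; monotonicity of the $j_i$ only improves the constant, and the paper does not use it). The gap is in the step you yourself flagged as crucial: the per-step growth lemma ``$x_{i+1} \geq c(r)\, x_i$ for all $i$'' is false. Constraint~\eqref{eq:robust} is only an \emph{upper} bound on $x_{i+1}$, and a robust strategy may take arbitrarily small steps at isolated indices. Concretely, for $r=5$ the prefix $(1,\,2,\,2.01)$ satisfies~\eqref{eq:robust} (since $2\leq 4$ and $2.01\leq 7$) and the extendability condition of Lemma~\ref{lemma:ext.condition} (since $5.01/2.01\approx 2.49 \leq \frac{r+\sqrt{r(r-4)}}{2}\approx 3.62$), so it extends to an infinite $5$-robust strategy with $x_2/x_1 = 1.005$; the ratio at an isolated step can be made as close to $1$ as you like, defeating any fixed $c(r)>1$. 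Your proposed repair is also backwards: equality in~\eqref{eq:robust} defines the \emph{most aggressive}, fastest-growing continuation (this is exactly the tight extension of Definition~\ref{def:extension}), not a minimal strategy that pointwise lower-bounds every $r$-robust strategy; and subtracting two same-direction inequalities to get a two-term recurrence such as $x_{i+1}=rx_i-rx_{i-1}$ is not a valid manipulation --- it works only in the equality case, so the roots of the characteristic polynomial give no per-step lower bound.

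What is true, and all that the argument needs, is \emph{amortized} geometric growth, which has a short direct proof. Write $S_i=\sum_{j\leq i}x_j$. Since the sequence is increasing, $x_i \leq x_{i+1} \leq r x_i - S_i$, hence $S_i \leq (r-1)x_i = (r-1)(S_i-S_{i-1})$, i.e.\ $S_i \geq \frac{r-1}{r-2}\, S_{i-1}$; consequently $x_i \geq S_i/(r-1) \geq \frac{1}{r-1}\bigl(\frac{r-1}{r-2}\bigr)^{i} x_0$, so (after the harmless normalization $x_0\geq 1$, as all targets are at least $1$) the first index at which the strategy exceeds $\mu_i$ is $O(\log \mu_i)$, with the constant depending only on $r$, which is assumed constant. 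With this lemma substituted for your per-step claim, your counting goes through verbatim. The paper follows the same route but outsources the growth fact to known analyses of the recurrence~\eqref{eq:robust}, which give the sharper amortized rate $\zeta_1=\frac{r-\sqrt{r^2-4r}}{2}$; either version yields the claimed $O(\log^k \mu_k)$.
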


Our algorithm will compute a partial, i.e., finite strategy, which must then be extended appropriately to a fully $r$-robust, infinite strategy. The following definition formalizes this requirement. 

\begin{definition} Given an increasing  sequence $Y=(y_i)_{i=0}^l$ that defines a {\em partial} strategy of  $l+1$ bids, and an increasing sequence  $Z=(z_i)_{i=0}^\infty$, we call the strategy with sequence
$
(y_0,\ldots y_l,z_0,z_1, z_2\ldots)
$
the {\em extension of $Y$ based on $Z$}. We say that $Y$ is {\em $r$-extendable} if there exists a sequence $Z$ such that the extension of $Y$ based on $Z$ is a valid, $r$-robust bidding strategy. Last, we say that $Y$ is {\em tightly} $r$-extendable if it is $r$-extendable by a sequence $Z$ for which 
\begin{equation}
 z_i=rz_{i-1}- \sum_{j=0}^{i-1} z_j-\sum_{j=0}^l y_j,
 \label{eq:aggressive.z}
 \end{equation}
where $z_{-1}$ is defined to be equal to $y_l$. 
\label{def:extension}
\end{definition}
We also formalize what it means for a partial strategy to be $r$-robust:
\begin{definition}
    A partial strategy is {\em (partially) $r$-robust} if it is $r$-robust assuming the target is smaller than its last bid.
\end{definition}

The following lemma allows us to use tightness as a criterion for extendability. 
\begin{lemma}
A partial strategy $Y=(y_i)_{i=0}^l$ is $r$-extendable if and only if it is tightly $r$-extendable. 
\label{lemma:aggr.extendable}
\end{lemma}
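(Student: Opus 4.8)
The plan is as follows. One direction is immediate: if $Y$ is tightly $r$-extendable then, by definition, the specific sequence $Z$ of~\eqref{eq:aggressive.z} already witnesses that $Y$ is $r$-extendable. So all the content lies in the converse. Assume some $Z'=(z_i')_{i\ge 0}$ makes $(y_0,\dots,y_l,z_0',z_1',\dots)$ a valid $r$-robust strategy; I must show that the ``greedy'' sequence $Z$ defined by~\eqref{eq:aggressive.z} does as well. The robustness inequalities internal to $Y$, together with the junction inequality $z_0\le r y_l-\sum_{j=0}^l y_j$, depend only on $Y$ and hence already hold; and the sequence $Z$ of~\eqref{eq:aggressive.z} saturates every robustness constraint on the extension part by construction. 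Consequently the only thing that can fail is \emph{monotonicity}: the task reduces to showing that the greedy extension is strictly increasing (with $z_0>y_l$) whenever any extension is.

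The key device is to track, instead of the raw bids, a scale-invariant ``budget ratio''. Writing $U_i=\sum_{j=0}^l y_j+\sum_{j=0}^i z_j$ for the cumulative cost through the $i$-th extension bid and $B_i=r z_i-U_i$ for the remaining slack, the robustness constraint is exactly $z_{i+1}\le B_i$, so a strictly larger next bid can be placed if and only if $B_i>z_i$, i.e.\ $\sigma_i:=B_i/z_i>1$. Initializing the state of the process at $z_{-1}=y_l$, $U_{-1}=\sum_{j=0}^l y_j$, $B_{-1}=r y_l-U_{-1}$, a direct computation gives, after placing any admissible bid $z_{i+1}\in(z_i,B_i]$, the new ratio $\sigma_{i+1}=(r-1)-U_i/z_{i+1}$, which is increasing in $z_{i+1}$; hence, using $U_i=r z_i - B_i$,
\begin{equation}
\sigma_{i+1}\ \le\ r-\frac{r}{\sigma_i}\ =:\ f(\sigma_i),
\label{eq:ratio.rec}
\end{equation}
with equality precisely when the bid is placed tightly at $z_{i+1}=B_i$. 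In words, the greedy choice is exactly the one that maximizes the next budget ratio.

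The argument is then an induction comparing the greedy sequence $Z$ with the assumed valid $Z'$, both launched from the same initial state and hence the same $\sigma_{-1}>1$. Since $f$ is monotone increasing on $(0,\infty)$ and the tight move attains equality in~\eqref{eq:ratio.rec}, I obtain $\sigma_i\ge\sigma_i'$ for every $i$: granting $\sigma_i\ge\sigma_i'$, the greedy ratio satisfies $\sigma_{i+1}=f(\sigma_i)\ge f(\sigma_i')\ge\sigma_{i+1}'$. As $Z'$ is a valid infinite strategy, $\sigma_i'>1$ for all $i$, so $\sigma_i\ge\sigma_i'>1$ for all $i$ as well; by the characterization above the greedy extension can always place a strictly larger next bid, so $Z$ is strictly increasing with $z_0>y_l$. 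Combined with the constraints that hold automatically, this shows that the extension based on~\eqref{eq:aggressive.z} is a valid $r$-robust strategy, i.e.\ $Y$ is tightly $r$-extendable.

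I expect the main obstacle to be resisting the temptation to argue at the level of the bids or partial sums. The greedy sequence obeys $z_i=r z_{i-1}-r z_{i-2}$, a recurrence whose negative coefficient on $z_{i-2}$ destroys monotone domination, so naive pointwise comparisons such as $z_i\ge z_i'$ or $U_i\ge U_i'$ cannot be pushed through an induction. Passing to the ratio $\sigma_i$, for which the one-step map $f$ is genuinely monotone and the greedy move is provably ratio-maximizing, is what makes the domination argument go through; identifying this potential is the crux of the proof.
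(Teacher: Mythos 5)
Your proof is correct in substance but takes a genuinely different route from the paper's. The paper starts from an arbitrary witness $Z$ and transforms it into the tight one: it locates the first slack robustness constraint, multiplies the whole tail of the sequence by the factor $\delta>1$ that saturates that constraint, checks that this scaling preserves $r$-robustness, and iterates, asserting that the process converges to the tight extension of~\eqref{eq:aggressive.z}. You instead work directly with the tight (greedy) extension and certify its validity via a domination argument on the scale-invariant ratio $\sigma_i$, showing that any admissible move satisfies $\sigma_{i+1}\le f(\sigma_i)=r-r/\sigma_i$ with equality exactly for the tight move, and that monotonicity of $f$ propagates $\sigma_i\ge \sigma'_i>1$ along the induction. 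Your route avoids the paper's limiting step, which is stated somewhat informally (the paper does not verify that robustness and validity survive the passage to the limit), and your map $f$ is precisely the object underlying Lemma~\ref{lemma:ext.condition}: its fixed points are $\frac{r\pm\sqrt{r(r-4)}}{2}$, so your analysis dovetails with the extendability criterion proved there. The paper's scaling argument, in exchange, is shorter and requires no potential function.

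One step you should make explicit: in this paper a valid strategy must be increasing \emph{and unbounded} (every target $u$ must eventually be exceeded; cf.\ the proof of Lemma~\ref{lemma:ext.condition}), whereas your conclusion asserts only strict monotonicity, which by itself does not rule out a bounded greedy extension --- $\sigma_i>1$ for all $i$ is compatible with $\prod_i \sigma_i<\infty$. Fortunately your own machinery closes this gap in one line: the greedy sequence satisfies $z_{i+1}=\sigma_i z_i$ while the witness satisfies $z'_{i+1}\le \sigma'_i z'_i$, and both start from $z_{-1}=z'_{-1}=y_l$, so the ratio domination $\sigma_i\ge\sigma'_i>0$ chains into the pointwise bound $z_i\ge z'_i$; unboundedness of the valid witness $Z'$ then forces unboundedness of the tight extension. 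With that sentence added, the proof is complete.
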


The next lemma provides a necessary and sufficient condition under which a partial strategy is $r$-extendable. Its proof relies on finding the conditions for which a linear recurrence solution that formulates the tight $r$-extension yields an increasing, unbounded infinite sequence.

\begin{lemma}
A partial strategy $Y=(y_i)_{i=0}^l$ is $r$-extendable if and only if $Y$ is partially $r$-robust and it satisfies the condition
\[
\frac{\sum_{i=0}^l y_i}{y_l} \leq \frac{r+\sqrt{r(r-4)}}{2}.
\]
\label{lemma:ext.condition}
\end{lemma}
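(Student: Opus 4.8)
The plan is to invoke Lemma~\ref{lemma:aggr.extendable} to replace ``$r$-extendable'' by ``tightly $r$-extendable'', so that it suffices to characterize when the single, explicitly defined tight extension $(z_i)_{i\ge 0}$ of~\eqref{eq:aggressive.z} yields a valid $r$-robust strategy. Since tightness sets every robustness constraint~\eqref{eq:robust} for bids beyond $y_l$ to equality, the only things left to verify for the full sequence $(y_0,\dots,y_l,z_0,z_1,\dots)$ are: (i) the within-$Y$ constraints~\eqref{eq:robust} for indices $i<l$, which are exactly partial $r$-robustness (any $r$-robust extension restricts to a partially $r$-robust $Y$, and conversely these constraints involve only $Y$); and (ii) that the tail $(z_i)$ is strictly increasing, positive, and unbounded, so that it is a legitimate bidding strategy. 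This cleanly separates the two conditions in the statement: partial $r$-robustness governs the $y$-part, while the displayed inequality will govern the $z$-part, with the junction bid handled by the tight equality $z_0=ry_l-\sum_{j=0}^l y_j$.

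First I would turn~\eqref{eq:aggressive.z} into a homogeneous linear recurrence. Taking consecutive differences eliminates both the running sum $\sum z_j$ and the constant $\sum y_j$, yielding $z_i = r z_{i-1} - r z_{i-2}$ for $i\ge 1$, with $z_{-1}=y_l$ and $z_0 = ry_l-\sum_{j=0}^l y_j$. Its characteristic polynomial $t^2-rt+r$ has roots $t_\pm = \tfrac{r\pm\sqrt{r(r-4)}}{2}$; this already requires $r\ge4$ for real roots, consistent with the nonexistence of $r$-robust strategies below $4$. I record the identities $t_++t_-=r$, $\;t_+t_-=r$, and the inequalities $1<t_-\le 2\le t_+<r-1$ (each a one-line computation), which drive the monotonicity analysis. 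Writing the solution as $z_i=\alpha t_+^i+\beta t_-^i$ and solving the two initial conditions, a short elimination shows $\alpha\ge 0 \iff r\le (r-\rho)\,t_+ \iff \rho\le t_+$, where $\rho:=\sum_{i=0}^l y_i/y_l$ and I use $r/t_+=t_-$ and $r-t_-=t_+$. This is exactly the displayed bound.

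It remains to show that $\rho\le t_+$ (equivalently $\alpha\ge0$) is precisely the condition for $(z_i)$ to be increasing and unbounded. For necessity: if $\alpha<0$ the dominant term $\alpha t_+^i\to-\infty$, so the tail eventually turns negative and cannot form a valid strategy, whence by Lemma~\ref{lemma:aggr.extendable} $Y$ is not extendable. For sufficiency, assume $\rho\le t_+$: since $t_+<r-1$ we get $\rho<r-1$, hence $d_0:=z_0-z_{-1}=y_l(r-1-\rho)>0$; and $\alpha\ge0$ forces unboundedness (when $\alpha=0$ one checks $\beta=y_l(r-\rho)=y_l t_->0$, so $z_i=y_l t_-^{\,i+1}$ still grows). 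The differences $d_i:=z_i-z_{i-1}$ satisfy the same recurrence, so $d_i=\gamma t_+^i+\delta t_-^i$ with $\gamma=\alpha(t_+-1)/t_+\ge0$ and, when $\beta<0$, $\delta=\beta(t_--1)/t_-<0$; as $t_+/t_->1$, the factor $(t_+/t_-)^i$ is nondecreasing, so $d_0>0$ forces $\gamma>|\delta|$ and then $d_i=t_-^i\bigl(\gamma(t_+/t_-)^i+\delta\bigr)\ge t_-^i(\gamma+\delta)>0$ for all $i\ge0$. Thus the tail is strictly increasing and positive, $z_0>y_l$, and (by tightness) the full sequence is $r$-robust, establishing $r$-extendability.

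I expect the main obstacle to be this last step: it is tempting to conclude monotonicity from $\alpha>0$ alone, but in the regime $\rho<t_-$ one has $\beta<0$, so the two terms of $d_i$ have opposite signs, and I must argue that the single check $d_0>0$ propagates to all $i$ through the monotonicity of $(t_+/t_-)^i$. The degenerate repeated-root case $r=4$ ($t_+=t_-=2$, bound $\rho\le2$) is not covered by the generic closed form and would be treated separately via the solution $z_i=(\alpha+\beta i)2^i$ or by a continuity argument; everything else reduces to the elementary algebra recorded above.
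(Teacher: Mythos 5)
Your proof is correct and follows essentially the same route as the paper's: reduce to the tight extension via Lemma~\ref{lemma:aggr.extendable}, convert~\eqref{eq:aggressive.z} into the homogeneous recurrence $z_i=r(z_{i-1}-z_{i-2})$ with $z_{-1}=y_l$, and characterize when the resulting solution is increasing and unbounded, which is exactly the sign condition on the coefficient of the dominant root $t_+=\frac{r+\sqrt{r(r-4)}}{2}$ (the paper's case split $r'\gtrless r-2y_l$ under the normalization $\sum_i y_i=r$). If anything, your monotonicity argument via the difference sequence $d_i$ is slightly more careful than the paper's treatment of the subcase with coefficients of opposite signs, and like the paper you correctly defer the repeated-root case $r=4$ to a separate argument.
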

Algorithm~\ref{algo:pareto} computes the Pareto-optimal strategy. It consists of three steps. In steps 1 and 2, it computes a partial strategy of minimum expected cost, which is partially $r$-robust and is guaranteed to be $r$-extendable. In step 3, it applies the tight $r$-extension, which yields an $r$-robust bidding strategy that is Pareto-optimal. Given a configuration ${\bf j}=(j_1, \ldots ,j_k)$, we define by $P_{\bf{j}}$ the following family of LPs.

\noindent
\begin{minipage}[t]{0.48\textwidth}
\vspace{-0.5cm}
\begin{align}
 P_{\bf{j}} :=  \textrm{min} \quad & \sum_{i=1}^k p_i\cdot \sum_{q=0}^{j_i+1}x_{q} \label{LP:obj}\\
    \textrm{subj. to} \quad & 
    x_{i}\leq r x_{i-1}-\sum_{j=0}^{i-1}x_j,  i \in [1,j_k+1] \label{LP:comp} 
    \\
    & \sum_{i=0}^{j_k+1} x_{i} \leq 
    \frac{r+\sqrt{r(r-4)}}{2} x_{j_k+1}
    \label{LP:aggr} \\
    & x_{j_i}\leq\mu_i\leq x_{j_{i}+1} , i \in [1, k]
    \label{LP:conf}\\
    & x_{i} \leq x_{i+1},  i \in [0, j_k]\\
    & x_0 \leq r.
    \label{LP:monotone}
\end{align}
\end{minipage}
\normalsize
\hfill
\begin{minipage}[t]{0.44\textwidth}
In this family of LPs, the objective function~\eqref{LP:obj} is by definition the expected cost of the partial strategy $(x_0, \ldots ,x_{j_k+1})$ assuming a target drawn according to $\mu$, which optimizes the consistency. Constraint~\ref{LP:comp} guarantees that this partial strategy is $r$-robust, whereas~\eqref{LP:aggr} guarantees that the strategy is $r$-extendable, as stipulated in Lemma~\ref{lemma:ext.condition}.
Constraint~\eqref{LP:conf} describes the configuration $\bf{j}$, whereas~\eqref{LP:monotone} is an initialization condition that is necessary for monotonicity.
The following theorem formalizes the intuition behind the LP formulation, and establishes the correctness of our algorithm. 
\end{minipage}

\begin{algorithm}[t!]
\caption{Algorithm for computing the Pareto-optimal strategy.}
\label{algo:pareto}  
\textbf{Input}: Robustness requirement $r$, distributional prediction $\mu=\{(\mu_i,p_i)\}_{i=1}^k$. \\
\textbf{Output:} An $r$-robust strategy of optimal consistency.
\begin{algorithmic}[1]
\State For each configuration $\bf{j}$, find the solution $x_{\bf{j}}$ to the LP $P_{\bf j}$, if it exists.
\State Let ${\bf x}^*=(x^*_{0}, \ldots ,x^*_{j^*_k+1})$ be the solution of optimal objective value, for configuration $\bf{j}^*$, among all solutions found in Step 1.
\State Return the strategy
\[
(x^{*}_{0}, \ldots, x^{*}_{j^*_k}, \bar{z}_0, \bar{z}_1, \ldots )
\]
where $\bar{Z}=(\bar{z}_i)_{i\geq 0}$ is the tight $r$-extension of $(x^*_{0}, \ldots ,x^*_{j^*_k})$.  
\end{algorithmic}
\end{algorithm}

\medskip
\begin{theorem}
Algorithm~\ref{algo:pareto} computes a Pareto-optimal $r$-robust strategy.
\label{thm:pareto.main}
\end{theorem}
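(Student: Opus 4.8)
The goal is to certify that the strategy returned by Algorithm~\ref{algo:pareto} has the smallest possible consistency among all $r$-robust strategies. Since the denominator $\E_{z\sim\mu}[z]$ of~\eqref{eq:cons.distr} is a constant fixed by $\mu$, this is equivalent to minimizing the expected cost $\E_{z\sim\mu}[\cost(X,z)]=\sum_{i=1}^k p_i\,\cost(X,\mu_i)$. The plan is to (i) reduce the search to a finite family of configurations, (ii) show that $P_{\bf j}$ computes exactly the optimal expected cost among $r$-robust strategies of configuration ${\bf j}$, and (iii) show that the tight $r$-extension of Step~3 realizes this optimum as a genuine infinite strategy.

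First I would fix a configuration ${\bf j}=(j_1,\dots,j_k)$. By Definition~\ref{def:configuration} every $r$-robust strategy has exactly one configuration, and by Proposition~\ref{prop:configs} only $O(\log^k\mu_k)$ of them are valid; hence it suffices to solve the problem per configuration and take the best, exactly as Steps~1--2 do. The key structural remark is that once ${\bf j}$ is fixed, target $\mu_i$ is served by bid $x_{j_i+1}$, so $\cost(X,\mu_i)=\sum_{q=0}^{j_i+1}x_q$ and the expected cost equals the linear objective~\eqref{LP:obj}; moreover it depends only on the finite prefix $(x_0,\dots,x_{j_k+1})$, because every bid beyond $x_{j_k+1}$ exceeds $\mu_k\ge\mu_i$ for all $i$ and thus never affects the cost.

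Next I would prove that $P_{\bf j}$ is an exact reformulation. For the lower-bound (completeness) direction, take any $r$-robust strategy $X$ of configuration ${\bf j}$ and restrict it to $(x_0,\dots,x_{j_k+1})$: $r$-robustness is precisely~\eqref{eq:robust}, giving~\eqref{LP:comp} and the base case $x_0\le r$ in~\eqref{LP:monotone}; the configuration gives~\eqref{LP:conf}; monotonicity is immediate; and, crucially, since this prefix is the head of an \emph{infinite} $r$-robust strategy it is $r$-extendable, so Lemma~\ref{lemma:ext.condition} yields~\eqref{LP:aggr}. Hence the prefix is feasible for $P_{\bf j}$, so its cost---which equals that of $X$---is at least the optimum of $P_{\bf j}$. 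For the achievability (soundness) direction, take an optimal solution of $P_{\bf j}$: constraints~\eqref{LP:comp} make it partially $r$-robust, and~\eqref{LP:aggr} is exactly the hypothesis of Lemma~\ref{lemma:ext.condition}, so it is $r$-extendable; by Lemma~\ref{lemma:aggr.extendable} the tight $r$-extension of Definition~\ref{def:extension} completes it to an infinite $r$-robust strategy.

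Finally I would verify that the extension preserves both the configuration and the expected cost, which is the step I expect to be the main obstacle. Because every extension bid is strictly larger than the last prefix bid serving $\mu_k$, no target $\mu_i$ migrates onto an extension bid, so the configuration is unchanged and $\cost(X,\mu_i)$ stays $\sum_{q=0}^{j_i+1}x_q$; thus the realized expected cost equals the LP objective \emph{verbatim} rather than being inflated. The delicate point is precisely that the extension must be rooted at the bid that actually serves $\mu_k$, i.e.\ the first extension bid must coincide with that covering bid; here I expect to use the tightness machinery of Lemmas~\ref{lemma:aggr.extendable} and~\ref{lemma:ext.condition} together with the algebraic identity $r(\varphi-1)=\varphi^2$ for $\varphi=\tfrac{r+\sqrt{r(r-4)}}{2}$ to reconcile~\eqref{LP:aggr} with~\eqref{LP:comp} at the optimum. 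Combining the two directions, the minimum of $P_{\bf j}$ over all valid ${\bf j}$ equals the global optimal expected cost, the returned strategy attains it, and since it is $r$-robust with least consistency among all strategies of robustness at most $r$, no strategy can dominate it; hence it is Pareto-optimal. One last minor point: ~\eqref{LP:conf} relaxes the strict inequality $x_{j_i}<\mu_i$ of Definition~\ref{def:configuration} to a closed constraint, so on the boundary $x_{j_i}=\mu_i$ the LP may overcount, but the correct value is then recovered by the neighbouring configuration, leaving the minimum over all configurations exact.
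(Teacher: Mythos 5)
Your proposal is correct and follows essentially the same route as the paper's own proof: feasibility in $P_{\bf j}$ of the prefix of any $r$-robust strategy with configuration $\bf{j}$ (with constraint~\eqref{LP:aggr} supplied by Lemma~\ref{lemma:ext.condition}) gives one direction, and tight $r$-extendability of the LP optimum via Lemmas~\ref{lemma:aggr.extendable} and~\ref{lemma:ext.condition} gives the other. Your final ``delicate point'' is in fact already resolved by your own soundness step: since you extend the partial strategy \emph{including} the covering bid $x^*_{j_k+1}$ (which is what the paper's proof does as well, the literal wording of Step~3 of Algorithm~\ref{algo:pareto} notwithstanding), no target migrates onto an extension bid and the LP objective is realized verbatim, so the identity $r(\varphi-1)=\varphi^2$ is never needed.
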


In regards to the time complexity, from Proposition~\ref{prop:configs} it follows that Step 1 of Algorithm~\ref{algo:pareto} requires solving $O(\log ^k \mu_k)$ LPs, each of which takes time polynomial in the size of the input. For constant $k$, the algorithm is polynomial-time, though a realisneuripstic implementation can be time-consuming as $k$ becomes large. To mitigate the effect on the complexity, but also in order to handle any general distribution $\mu$,  we can apply a quantization rule to $\mu$ with exponentially spaced levels. This allows to reduce the runtime complexity, at the expense of a small degradation on the approximation of the consistency. The following theorem shows that we can approximate the Pareto frontier to an arbitrary degree of precision in time polynomial in the  range of the distributional. 

\begin{theorem}
Let $\mu$ be a distributional prediction with support in $[m,M]$.
There exists a quantization of $\mu$ for which Algorithm~\ref{algo:pareto} runs in time polynomial in $\max\{M/m, \log m\}$, and yields a $e^{\frac{1}{c}}$-approximation of the optimal consistency, for any constant $c>1$ independent of $r$.
\label{thm:quant}
\end{theorem}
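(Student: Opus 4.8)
The plan is to fix the quantization first and then prove the two assertions---running time and approximation quality---separately. I would set $\rho = e^{1/(2c)}$ and define the geometric grid $G = \{m\rho^{\ell} : \ell \in \mathbb{N}\}$; the quantized prediction $\tilde\mu$ is obtained by moving each support point $\mu_i$ down to the largest grid point $\tilde\mu_i \le \mu_i$ and aggregating the probabilities of points that collapse together. By construction $\tilde\mu_i \le \mu_i \le \rho\,\tilde\mu_i$, and the number of distinct support points drops to $\tilde k \le \lceil \log_\rho(M/m)\rceil + 1 = O(c\log(M/m))$. Algorithm~\ref{algo:pareto} is then run on $\tilde\mu$.

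For the approximation guarantee I would exploit the linear scaling of \cost together with the monotonicity of $\cost(X,\cdot)$. Writing $\mathrm{OPT}(\nu)$ for the optimal consistency of an $r$-robust strategy against $\nu$, I first bound $\mathrm{OPT}(\tilde\mu)\le \rho\,\mathrm{OPT}(\mu)$: if $X^\star$ is optimal for $\mu$, then coupling each $z\sim\mu$ with its rounded value $\tilde z\sim\tilde\mu$ gives $\cost(X^\star,\tilde z)\le \cost(X^\star,z)$ by monotonicity and $\E[\tilde z]\ge \E[z]/\rho$ since $\tilde z \ge z/\rho$, so $\cons(X^\star,\tilde\mu)\le \rho\,\cons(X^\star,\mu)$. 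Conversely, let $\tilde X$ be the strategy returned for $\tilde\mu$; I would output the rescaled strategy $\rho\tilde X$, which is again $r$-robust because robustness is scale-invariant. Using $\cost(\rho\tilde X,z)=\rho\,\cost(\tilde X,z/\rho)$ together with $z/\rho\le \tilde z$, I get $\cost(\rho\tilde X,z)\le \rho\,\cost(\tilde X,\tilde z)$, while $\E[z]\ge \E[\tilde z]$; hence $\cons(\rho\tilde X,\mu)\le \rho\,\cons(\tilde X,\tilde\mu)=\rho\,\mathrm{OPT}(\tilde\mu)$. Chaining the two inequalities yields $\cons(\rho\tilde X,\mu)\le \rho^2\,\mathrm{OPT}(\mu)=e^{1/c}\,\mathrm{OPT}(\mu)$, the claimed bound. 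The rescaling by $\rho$ is essential: it converts the pointwise rounding error into a single global multiplicative factor and avoids comparing $\cost$ at two targets that may straddle an intermediate bid.

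For the running time the issue is that the black-box count $O(\log^{\tilde k}\mu_{\tilde k})$ of Proposition~\ref{prop:configs} is only quasi-polynomial once $\tilde k=\Theta(\log(M/m))$, so I need a sharper count that uses the geometric spacing. First I would argue it suffices to enumerate configurations realizable by strategies with $x_0\ge 1$: since targets are at least $1$, any bid below $1$ can be deleted without violating~\eqref{eq:robust} and without increasing the cost, so this is without loss of generality. For such strategies the partial sums $S_i=\sum_{j\le i}x_j$ satisfy $S_i \ge \tfrac{r}{r-1}S_{i-1}$, a direct consequence of~\eqref{eq:robust}; this bounds the number of bids in any interval $(\tilde\mu_{i-1},\tilde\mu_i]$ by $O(\log(\tilde\mu_i/\tilde\mu_{i-1})+1)$ and the number below $\tilde\mu_1$ by $O(\log M)$. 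Hence the number of valid configurations is at most $O(\log M)\cdot\prod_{i\ge 2}\bigl(O(\log(\tilde\mu_i/\tilde\mu_{i-1}))+1\bigr)$. Since consecutive grid points differ by a factor at least $\rho$, each logarithm is bounded below by a constant, so $1+x\le 2^{O(x)}$ applies termwise and the product telescopes to $2^{O(\sum_i\log(\tilde\mu_i/\tilde\mu_{i-1}))}=2^{O(\log(M/m))}=(M/m)^{O(1)}$. Thus there are $O(\log M)\cdot (M/m)^{O(1)}$ configurations, and as $\log M\le \log m+\log(M/m)$ this is polynomial in $\max\{M/m,\log m\}$. Each LP $P_{\mathbf j}$ has $O(\log M)$ variables and constraints and is solved in polynomial time, so Step~1, and hence Algorithm~\ref{algo:pareto}, runs within the stated bound.

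The step I expect to be the main obstacle is this configuration count, because the accounting must be exactly right: the prefix of bids below the smallest support point is $\Theta(\log m)$ in size, so if it contributed exponentially it would produce an $m^{O(1)}$ blow-up and defeat the polynomial-in-$\log m$ claim. The key observations are (a) that reducing to $x_0\ge 1$ keeps this prefix finite and of length $O(\log M)$, and (b) that it contributes only a single linear factor to the count, whereas the inter-support gaps contribute the telescoping product that collapses to $(M/m)^{O(1)}$. Ensuring the $\log m$ dependence stays linear while only the range $M/m$ enters polynomially is the crux; by contrast the approximation bound follows cleanly once the returned strategy is rescaled by $\rho$.
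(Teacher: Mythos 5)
Your runtime analysis is sound and essentially the paper's argument in different clothes: your growth bound $S_i \geq \frac{r}{r-1}S_{i-1}$ derived from \eqref{eq:robust} plays the role of the paper's appeal to prior work on this recurrence, your reduction to $x_0\geq 1$ legitimately justifies the $O(\log M)$ bound on the prefix (a point the paper handles by citation), and your termwise $1+x\le 2^{O(x)}$ telescoping is interchangeable with the paper's AM-GM step; both land on $O(\log M)\cdot (M/m)^{O(1)}$ configurations. The approximation half, however, has a genuine gap: the claim that ``$\rho\tilde X$ is again $r$-robust because robustness is scale-invariant'' is false for this problem. Robustness is the supremum in \eqref{eq:cr.bidding} over all targets $u\geq 1$, and taking $u\to 1$ shows that any $r$-robust strategy must have $x_0\leq r$ --- this is exactly the LP constraint \eqref{LP:monotone}, which is a robustness constraint, not merely an initialization for monotonicity. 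The recurrence \eqref{eq:robust} is homogeneous, but the problem itself has a fixed unit scale (targets bounded below by $1$), so multiplying all bids by $\rho>1$ can break robustness at small targets. Concretely: if $\tilde\mu$ puts all its mass on the single point $r$, the optimal $r$-robust strategy bids $\tilde x_0=r$ (consistency $1$, with the constraint $x_0\le r$ binding); your output $\rho\tilde X$ then has first bid $\rho r$, so its ratio at $u=1$ is $\rho r>r$, and the returned strategy violates the theorem's robustness requirement. Nor can you patch this by running Algorithm~\ref{algo:pareto} with robustness parameter $r/\rho$ and then scaling up: scaling does not convert $(r/\rho)$-robustness of the recurrence into $r$-robustness, since the parameter $r$ in \eqref{eq:robust} multiplies $x_i$ but not the partial sum, so it does not rescale.

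The paper sidesteps this entirely by rounding in the opposite direction: the quantization $\bar\mu$ moves each mass \emph{up} (right) to the nearest grid point $m\cdot e^{i/c}$, by a multiplicative factor at most $e^{1/c}$. Then the strategy $X$ returned by Algorithm~\ref{algo:pareto} on $\bar\mu$ is used \emph{unmodified} --- so it is genuinely $r$-robust --- and the single chain
\[
\cons(X,\mu)\;=\;\frac{\E_{z\sim\mu}[\cost(X,z)]}{\E_{z\sim\mu}[z]}\;\leq\;\frac{\E_{z\sim\bar\mu}[\cost(X,z)]}{\E_{z\sim\mu}[z]}\;\leq\;\cons(X,\bar\mu)\cdot e^{1/c}
\]
follows from monotonicity of $z\mapsto\cost(X,z)$ (real targets are dominated by their rounded versions) together with $\E_{z\sim\bar\mu}[z]\leq e^{1/c}\,\E_{z\sim\mu}[z]$; no rescaling of bids ever occurs, and the full factor $e^{1/c}$ is spent on grid spacing rather than split as $\rho^2$ with $\rho=e^{1/(2c)}$. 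Note that your direction (a), bounding $\mathrm{OPT}(\tilde\mu)\leq\rho\,\mathrm{OPT}(\mu)$, is perfectly valid precisely because it evaluates the \emph{unmodified} strategy $X^\star$ on the quantized distribution; the failure occurs only in direction (b), the one step where you transform a strategy. To repair your proof, replace the down-rounding plus rescaling with up-rounding and the paper's one-sided comparison.
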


This result is useful especially if the prediction has locality, i.e., $M$ is not vastly larger than $m$. This occurs, for instance, in the related application of contract scheduling, as discussed in~\cite{DBLP:journals/jair/AngelopoulosK23}. In particular, if $M/m$ is bounded by a constant, the 
algorithm of Theorem~\ref{thm:quant}
is polynomial-time, hence a PTAS.

\section{Randomized learning-augmented bidding}
\label{sec:randomized}

In this section, we consider a different stochastic setting. Namely, we do not assume any known stochastic properties of the prediction oracle, but we allow the bidding algorithm access to random bits. That it, the elements of the bidding sequence $X$ are random variables, and so is the cost at which $X$ finds the target. Given a prediction $\hat{u}$ on the target, the consistency and the robustness of the randomized strategy $X$ are defined as 
\begin{equation}
\cons(X) = 
\frac{\E[cost(X,\hat{u})]}{\hat{u}}
\quad 
\textrm{and } \   
\rob(X)=\sup_u \frac{\E[cost(X,u)]}{u}.
\end{equation}

In the standard setting of no predictions, there is a simple randomized strategy of the form $R=(e^{i+s})_i$, with $s \sim U[0,1)$, of competitive ratio equal to $e$.  This is optimal as shown in~\cite{chrobak2008incremental} using a complex proof based on linear programming and dual fitting techniques. The following proposition shows that the algorithm is essentially tight across all targets, a result which will be useful in interpreting the performance of our learning augmented algorithm.

\begin{proposition}
The randomized algorithm $R=(e^{i+s})_i$, with $s \in U[0,1)$ satisfies
$\E[\textrm{cost}(X,u)]/u] =e-O(1/u)$, for all $u$. 
\label{prop:e}
\end{proposition}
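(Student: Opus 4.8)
The plan is to fix an arbitrary target $u \geq 1$ and compute the expectation over the single random shift $s \sim U[0,1)$ in closed form; the claimed $e - O(1/u)$ behaviour will in fact come out as the exact identity $\mathbb{E}_s[\cost(R,u)] = eu - 1$.

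First I would identify which bid first reaches $u$. Writing $L = \ln u$, the defining condition $e^{i-1+s} < u \le e^{i+s}$ is equivalent to $i = \lceil L - s\rceil$ (ignoring the measure-zero set where $L - s \in \mathbb{Z}$). Since the bids form a geometric sequence with ratio $e$, the cost up to and including bid $i$ is a geometric sum, which I would write as
\[
\cost(R,u) \;=\; \sum_{j=0}^{i} e^{j+s} \;=\; \frac{e\, e^{i+s} - e^{s}}{e-1},
\qquad i = \lceil L - s\rceil .
\]

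Next I would take the expectation over $s$, separating the two terms. The subtracted term is immediate, $\mathbb{E}_s[e^{s}] = \int_0^1 e^{s}\,ds = e-1$, contributing $-1$ after division by $e-1$. The main term requires evaluating $\mathbb{E}_s[e^{\lceil L-s\rceil + s}]$, which is where essentially all the work lies. Writing $L = n + f$ with $n = \lfloor L\rfloor \ge 0$ and $f \in [0,1)$, a short case analysis of the ceiling gives $\lceil L - s\rceil = n+1$ for $s \in [0,f)$ and $\lceil L-s\rceil = n$ for $s \in (f,1)$. Splitting the integral at $s=f$ and summing the two pieces yields
\[
\mathbb{E}_s\!\left[e^{\lceil L-s\rceil + s}\right]
= e^{n+1}(e^{f}-1) + e^{n}(e - e^{f})
= e^{\,n+f}(e-1)
= u(e-1).
\]
Combining the two contributions gives $\mathbb{E}_s[\cost(R,u)] = \tfrac{e}{e-1}\,u(e-1) - 1 = eu - 1$, and therefore $\mathbb{E}_s[\cost(R,u)]/u = e - 1/u$, which is $e - O(1/u)$.

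I expect the main (and essentially the only) obstacle to be the boundary behaviour of the index for small targets: when $1 \le u < e$ one has $n = 0$, so for $s > f$ the relevant index is $i = 0$ and the ``first bid already exceeds $u$'' case must be accounted for. Reassuringly, the geometric-sum formula above remains valid at $i=0$, where it reduces to $x_0 = e^{s}$, so the same case analysis applies without a separate argument; moreover no truncation at a negative index is ever required, since $u \ge 1$ forces $L \ge 0$ and hence $n = \lfloor L\rfloor \ge 0$. The remaining steps are routine algebra, and the exactness of the identity $\mathbb{E}_s[\cost(R,u)] = eu-1$ also makes transparent that the worst-case ratio $\sup_{u\ge 1}(e-1/u) = e$ is approached but never attained, as expected for the optimal randomized strategy.
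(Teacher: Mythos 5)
Your proof is correct and follows essentially the same route as the paper's: both split the randomness at the fractional part of $\ln u$ (your $f$, the paper's $\delta$), sum the geometric bids in closed form, and integrate $e^s$ over the two resulting subintervals of $[0,1)$. Your version is a mild sharpening in that you carry the lower-order term exactly and obtain the clean identity $\E_s[\cost(R,u)] = eu - 1$, whereas the paper absorbs it into the $O(1/u)$ term.
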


In the remainder of the section, we provide both upper and lower bounds on the performance of randomized learning-augmented strategies. 

\subsection{Upper bound}
\label{subsec:rand.upper}

We propose a parameterized algorithm with parameters $\delta \in [0,1)$ and  $a>1$ which, informally, regulate the amount of the desired randomness and the geometric step of the strategy, respectively. Let $\lambda \in [1,a)$, and $j \in \mathbb{N}$  be such that $\hat{u}=\lambda a^{j+\delta}$, namely $j = \left\lfloor \log_a \hat{u} - \delta \right\rfloor$, and $\lambda=\frac{\hat{u}}{a^{j + \delta}}$. We define by $R_{\delta,a}$ the randomized strategy with bids $x_i=\lambda a^{i+s}$, where $s \in U[\delta,1)$. The following theorem quantifies the consistency and the robustness of this strategy. 

\begin{theorem}
Strategy $R_{\delta,a}$ has consistency at most $\frac{a(a-a^\delta)}{a^\delta(a-1)(1-\delta)\ln a}$, and robustness
at most 

\noindent
$\frac{a(a-a^\delta)}{(a-1)(1-\delta)\ln a }$.
\label{thm:randomized.performance}
\end{theorem}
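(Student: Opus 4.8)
The plan is to directly compute the expected cost of the randomized strategy $R_{\delta,a}$ against an arbitrary target and against the predicted target, then divide by the appropriate denominator. First I would fix the strategy: the bids are $x_i=\lambda a^{i+s}$ with $s$ uniform on $[\delta,1)$, so the randomness amounts to a random phase shift of a geometric sequence of ratio $a$. The key structural fact I would exploit is that for a geometric strategy $X=(\lambda a^{i+s})_i$, when the algorithm overshoots the target $u$ at bid $x_m$ (i.e.\ $x_{m-1}<u\le x_m$), the total cost is $\sum_{j=0}^m x_j = \lambda a^s \frac{a^{m+1}-1}{a-1}$, which for large $m$ is essentially $\frac{a}{a-1}x_m$. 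So the cost is (up to the lower-order geometric tail) a constant times the first bid that reaches or exceeds $u$.

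The heart of the computation is an integration over the phase $s$. For a fixed target $u$, write $u=\lambda a^{t}$ for the appropriate real exponent $t$; the smallest index $m$ with $x_m\ge u$ depends on how $s$ compares to the fractional part of $t$. As $s$ ranges uniformly over a length-$(1-\delta)$ interval, the reaching bid $x_m=\lambda a^{m+s}$ is essentially $u\cdot a^{\{\text{something}\}}$, and integrating $a^s$ (equivalently, the dominant term $\frac{a}{a-1}x_m/u$) over $s$ yields a factor of $\frac{1}{(1-\delta)\ln a}\int a^s\,ds$. The worst-case target for robustness is one infinitesimally above a bid, which forces $m$ to be as large as possible for the given phase and produces the extra factor $(a-a^\delta)$ in the numerator; this accounts for the difference between the robustness bound and the consistency bound. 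I would set up the integral carefully so that the range of $s$ that causes the strategy to ``just miss'' a bid is correctly weighted. For the consistency, the target is exactly the prediction $\hat u=\lambda a^{j+\delta}$, which by the choice of $j$ and $\lambda$ sits at the bottom ($s=\delta$) edge of the phase window, so the reaching bid is $x_j=\lambda a^{j+s}$ for $s\ge\delta$, giving the tighter factor $a^{-\delta}$ that appears in the consistency bound and is absent from the robustness bound.

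Concretely, I would (i) express $\E[\cost(X,u)]$ as $\int_\delta^1 \cost(X,u)\,\frac{ds}{1-\delta}$ using the density $\frac{1}{1-\delta}$ of $s$; (ii) substitute the closed form for the cost in terms of the reaching bid; (iii) bound the geometric sum by its leading term $\frac{a}{a-1}x_m$, discarding lower-order terms in the upper bound; (iv) evaluate $\int a^{s}\,ds=\frac{a-a^\delta}{\ln a}$ over the relevant sub-interval; and (v) divide by $u$ (for robustness) or by $\hat u$ (for consistency). Collecting the factors $\frac{a}{a-1}$, $\frac{1}{1-\delta}$, $\frac{1}{\ln a}$, $(a-a^\delta)$, and (for consistency) $a^{-\delta}$ reproduces the two claimed expressions.

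The main obstacle I anticipate is the bookkeeping in step (i)–(ii): carefully identifying, for each phase $s$ in the integration window, exactly which bid $x_m$ first reaches the target, so that the piecewise-constant integrand is integrated over the correct sub-intervals of $s$. The supremum over $u$ in the robustness definition must be handled by taking the worst phase alignment uniformly, which is what yields the $(a-a^\delta)$ factor rather than the smaller consistency factor; justifying that this alignment is indeed the worst case (as opposed to merely an upper bound) is where I would be most careful. The asymptotic simplification of the geometric tail is routine and only affects lower-order terms, so for an \emph{upper} bound on consistency and robustness it suffices to use $\sum_{j=0}^m x_j \le \frac{a}{a-1}x_m$, keeping the argument clean.
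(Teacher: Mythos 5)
Your proposal is correct and takes essentially the same route as the paper's proof: integrate the cost over the uniform phase $s\in[\delta,1)$, observe that $\hat u$ is found at bid $j$ for \emph{every} $s$ (which is what yields the extra $a^{-\delta}$ factor in the consistency), and for robustness do the piecewise bookkeeping by comparing $s$ with the target's phase $\mu$ where $u=\lambda a^{k+\mu}$ --- precisely the paper's two cases $\mu\le\delta$ and $\mu>\delta$. One minor caveat: since the bids themselves are random, the worst target is not literally ``infinitesimally above a bid'' but the one with the worst deterministic phase (the paper's computation shows the expected ratio in the second case is increasing in $\mu$, so the supremum is approached as $\mu\to 1$), though the case-by-case integration you outline resolves this correctly.
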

Strategy $R_{\delta,a}$ interpolates between two extreme strategies. Specifically, for $\delta=0$, it has consistency and robustness that are both equal to $a/\ln a$, which is minimized for $a=e$. In this case, the strategy is equivalent to the competitively optimal one, and from Proposition~\ref{prop:e} it does not exhibit any consistency/robustness tradeoff (it optimizes, however, the robustness). In the other extreme, for $\delta \to 1$, simple limit evaluations show that $\rob(R_{1,a})=a^2/(a-1)$, whereas $\cons(R_{1,a})=a/(a-1)$. From the study of deterministic bidding strategies~\cite{angelopoulos2024online}, this is the best possible consistency/robustness tradeoff that can be attained by a deterministic strategy (and note that $a^2/(a-1)\geq 4$, for all $a$). This is again expected, since as $\delta \to 1$, the strategy does not use any randomness.

Define now the strategy $R^*=R_{\delta^*,a^*}$ in which the parameters $\delta^*,a^*$ are chosen in optimal manner. Namely, they are the solutions to the optimization problem
\begin{align}
\max a^\delta \ \textrm{ subject to } \frac{a(a-a^\delta)}{(a-1)(1-\delta)\ln a }\leq r,
\label{eq:R^*}
\end{align}
then from Theorem~\ref{thm:randomized.performance}, it follows that  $R^*$ achieves the best-possible consistency among all $r$-robust strategies in the class $\cup_{\delta, a} R_{\delta,a}$. Hence, $R^*$ has at least as good consistency as the deterministic Pareto-optimal strategy. For instance, for $r=4$, $R^*$ has consistency approximately equal to 1.724, which is attained at $\delta^*\approx 0.9$. However, as $r$ increases, numerical solutions of~\eqref{eq:R^*} show that $\delta^*$ quickly approaches 1. For example, for $r=4.5$, $\delta^*=0.95$, and for $r=5$, $\delta^*\approx 0.99$. This implies that any benefit that $R^*$ attains quickly dissipates as $r$ increases, and randomization is not helpful. In the next section we show that this is not a just a deficiency of $R^*$, but that any randomized strategy has consistency that approaches the best deterministic one, as a function of the robustness $r$.

\subsection{Lower bound}
\label{subsec:lower.randmized}

In this section, we prove a lower bound on the consistency of any randomized $r$-robust strategy, with $r\geq e$. 
We first begin with an alternative proof that $e$ is the best randomized competitive ratio in the no-prediction setting. 
A generalization of this proof will
be likewise useful in the prediction setting, namely in the proof of Lemma~\ref{thm:main.stochastic}. We  adapt  an approach due to Gal~\cite{Gal80}, who studied the randomized linear search problem. For any fixed, bur arbitrarily small $\varepsilon >0$ , define $R=\exp \left( \frac{1-\varepsilon}{\varepsilon}\right)$. Let $D_\varepsilon$ be the  distribution on the target with  density $\varepsilon/t$, for all $1\leq t< R$, and has a probability atom of mass $\varepsilon$ at $t=R$. Let $X=(x_i)_{i\geq 0}$ denote a {\em deterministic} bidding strategy, against the target distribution $D_\varepsilon$. Since the domain of $D_\varepsilon$ is bounded, $X$ is of the form $(x_i)_{i=0}^n$, for some finite $n$.
We show that for targets drawn from $D_\varepsilon$, the expected performance ratio of $X$ is at least $(1-\varepsilon)e$. From the minimax theorem, we then obtain that the optimal competitive ratio is arbitrarily close to $e$. 

\begin{theorem}
For any $\varepsilon>0$ and $D_\varepsilon$ defined as above, it holds that
$
\E_{u \sim D_\epsilon}[\frac{\textrm{cost}(X,u)}{u}] \geq (1-\varepsilon)e.
$
\label{thm:e-comp}
\end{theorem}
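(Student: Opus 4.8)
The plan is to fix an arbitrary finite deterministic strategy $X=(x_i)_{i=0}^n$, evaluate its expected performance ratio against $D_\varepsilon$ in closed form, and then minimize over all such strategies. First I would record two reductions that hold without loss of generality for the minimizing strategy. Since every target lies in $[1,R]$, any bid strictly below $1$ never catches a target and only inflates every partial sum $\sum_{j\le i}x_j$; deleting such bids decreases $\cost(X,u)$ for every $u$, so I may assume $x_0\ge 1$. Likewise, the atom at $R$ forces $x_n\ge R$, and shrinking $x_n$ down to $R$ only lowers the cost on the last interval and at the atom, so I may assume $x_n=R$.

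Next I would compute $E:=\E_{u\sim D_\varepsilon}[\cost(X,u)/u]$ directly. Writing $S_i=\sum_{j=0}^i x_j$, the cost equals $S_i$ for every target in $(x_{i-1},x_i]$ (with $S_0=x_0$ on $[1,x_0]$, and $S_n$ at the atom), so using $\int_{x_{i-1}}^{x_i}\frac{\varepsilon}{t^2}\,dt=\varepsilon(1/x_{i-1}-1/x_i)$ and atom mass $\varepsilon$ at $R=x_n$,
\[
E=\varepsilon(x_0-1)+\varepsilon\sum_{i=1}^{n}S_i\Big(\tfrac{1}{x_{i-1}}-\tfrac{1}{x_i}\Big)+\varepsilon\,\frac{S_n}{R}.
\]
The crucial step is an Abel (summation-by-parts) manipulation of the middle sum: since $S_i-S_{i-1}=x_i$, it telescopes to $1+\tfrac{x_1}{x_0}-\tfrac{S_n}{x_n}+\sum_{k=1}^{n-1}\tfrac{x_{k+1}}{x_k}$. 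The term $-\varepsilon S_n/x_n$ then cancels the atom contribution $\varepsilon S_n/R$ exactly, and the remaining boundary pieces collapse, yielding the clean identity
\[
E=\varepsilon\Big(x_0+\sum_{k=0}^{n-1}\frac{x_{k+1}}{x_k}\Big).
\]

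Finally I would carry out the optimization. Introducing the convention $x_{-1}=1$, the bracket is exactly $\sum_{k=-1}^{n-1}c_k$ with $c_k=x_{k+1}/x_k$, a sum of $m:=n+1$ positive numbers whose product telescopes to $x_n/x_{-1}=R$. By AM--GM, any $m$ positive reals of product $R$ have sum at least $m\,R^{1/m}$, and minimizing the real function $m\mapsto m\,R^{1/m}$ gives value $e\ln R$ (attained at $m=\ln R$); as this is a minimum over all real $m$, it lower-bounds the value at every integer $m$. Hence $x_0+\sum_{k=0}^{n-1}x_{k+1}/x_k\ge e\ln R$, so $E\ge\varepsilon e\ln R=\varepsilon e\cdot\tfrac{1-\varepsilon}{\varepsilon}=(1-\varepsilon)e$, as claimed; feeding this into the minimax theorem over the $\varepsilon$-family then yields the randomized lower bound of $e$ in the no-prediction setting.

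I expect the main obstacle to be the bookkeeping in the second step: arranging the interval decomposition and the atom term precisely so that summation-by-parts produces the clean form $\varepsilon(x_0+\sum x_{k+1}/x_k)$ with no residual boundary terms — in particular verifying the cancellation of $\varepsilon S_n/R$ — and rigorously justifying that the two WLOG reductions leave the infimum unchanged. Once the closed form is in hand, the concluding AM--GM and single-variable minimization over $m$ are routine.
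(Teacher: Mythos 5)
Your proposal is correct and follows essentially the same route as the paper: both arrive at the exact identity $\E_{u\sim D_\varepsilon}[\cost(X,u)/u]=\varepsilon\sum_{i=0}^{n}\frac{x_i}{x_{i-1}}$ (with $x_{-1}=1$, $x_n=R$), then apply AM--GM to the ratios, whose product telescopes to $R$, and finish with the single-variable fact that $m\,R^{1/m}\geq e\ln R$ (equivalently $x/\ln x\geq e$ at $x=R^{1/m}$). The only cosmetic difference is that you derive the identity interval-by-interval via Abel summation with an explicit atom cancellation, whereas the paper gets it in one line by exchanging the order of summation so that each bid $x_i$ is charged against all targets exceeding $x_{i-1}$, giving $\int_{x_{i-1}}^{R}\frac{1}{y}\,dG(y)=\varepsilon/x_{i-1}$ directly.
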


We now move to the learning-augmented setting. Consider again the distribution of targets $D_\varepsilon$, and suppose that the prediction of the oracle is $\hat{u}=R$. Let $X=(x_i)_{i=0}^n$ denote a deterministic strategy, for some $n$ that depends on $\varepsilon$, 
then the consistency of $X$ is 
\begin{align}
\cons(X)&= \frac{cost(X,\hat{u})}{{\hat{u}}} 
=\frac{1}{R} \sum_{i=0}^n x_i,
\label{eq:cons.stoch}
\end{align} 
which does not depend on distributional assumptions,  whereas for the robustness, we have the following useful series of inequalities following the proof of Theorem~\ref{thm:e-comp} (see Appendix~\ref{app:randomized}):
\begin{equation}
\rob(X)= \varepsilon \sum_{i=0}^n \frac{x_i}{x_{i-1}}
\geq \varepsilon (n+1) \left( \prod_{i=0}^n \frac{x_i}{x_{i-1}}
\right)^\frac{1}{n+1} 
= (1-\varepsilon) \frac{R^{1/(n+1)}}{\ln R^{1/(n+1)}}.
\label{eq:rob.stoch}
\end{equation}
The following is the main technical result that places a lower bound on the scalarized objective.
\begin{lemma}
Given $\varepsilon>0$, let
$T$ denote the quantity $\frac{R^{\frac{1}{n}}}{\ln R^{\frac{1}{n}}}$. For any $\lambda>0$, it holds that
\[
\rob(X)+\lambda \cdot \cons(X)\geq (1-\varepsilon)T+ \lambda(
1+\frac{1}{T\cdot F(T)})-\delta(\varepsilon)
\]
where $F(T)=\ln (T(\ln T+\ln \ln T))$ , and $\lim_{\varepsilon \to 0} \delta(\varepsilon)=0$.
\label{thm:main.stochastic}
\end{lemma}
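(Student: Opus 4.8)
The plan is to lower bound the scalarized objective $\rob(X)+\lambda\cdot\cons(X)$ over all deterministic strategies $X=(x_i)_{i=0}^n$ against the worst-case distribution $D_\varepsilon$ with prediction $\hat u=R$. The two ingredients are already prepared for us: \eqref{eq:cons.stoch} gives $\cons(X)=\frac1R\sum_{i=0}^n x_i$, and \eqref{eq:rob.stoch} gives $\rob(X)\ge (1-\varepsilon)T$ where $T=R^{1/(n+1)}/\ln R^{1/(n+1)}$. The issue is that \eqref{eq:rob.stoch} is tight (via AM--GM) exactly when all ratios $x_i/x_{i-1}$ are equal, i.e.\ for a geometric strategy, whereas a geometric strategy does \emph{not} minimize $\cons(X)$. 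So the first step is to argue that, for the purpose of the lower bound, it suffices to analyze geometric strategies, or more carefully, to treat $n$ and the common ratio as the effective free parameters and bound the consistency term from below subject to the robustness being controlled by $T$.

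First I would fix $n$ and reparametrize: write the strategy in terms of the ratios $\rho_i=x_i/x_{i-1}$ (with $x_{-1}:=1$, so that $x_i=\prod_{j=0}^i \rho_j$). The robustness lower bound depends only on the geometric mean of the $\rho_i$, which is pinned to $R^{1/(n+1)}$ and hence to $T$. The consistency $\cons(X)=\frac1R\sum_{i=0}^n\prod_{j\le i}\rho_j$ is what we must bound below. The natural move is: among all strategies whose robustness equals a prescribed value (equivalently whose ratios have a prescribed geometric mean), the geometric strategy is the one that minimizes the consistency, because the tail term $x_n=\prod\rho_j$ dominates the sum and is fixed by the constraint $x_n=\hat u=R$ forced by the prediction. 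Indeed, since the prediction equals $R$ and the cost at the predicted target is the sum up to the bid that covers $R$, the last bid is essentially $R$ and the partial sums of a geometric sequence are the smallest. Concretely, for a geometric strategy with ratio $a=R^{1/(n+1)}$ we get $\sum_{i=0}^n x_i = R\cdot\frac{a}{a-1}\cdot(1-a^{-(n+1)})\approx R\cdot\frac{a}{a-1}$, so $\cons(X)\approx \frac{a}{a-1}=\frac{1}{1-1/a}$.

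The second step is to rewrite $a=R^{1/(n+1)}$ in terms of $T$ so that the consistency bound reads $1+\frac{1}{T\cdot F(T)}$ with $F(T)=\ln(T(\ln T+\ln\ln T))$. This is the purely analytic heart of the argument. Since $T=a/\ln a$, I would solve this relation asymptotically for $\ln a$: writing $\ln a = \ln T + \ln\ln a$ and iterating the fixed-point once gives $\ln a \approx \ln T + \ln\ln T$, whence $a = T(\ln T+\ln\ln T)(1+o(1))$ and $\ln a = F(T)(1+o(1))$. Then $\frac{1}{a-1}=\frac{1}{a}\cdot\frac{1}{1-1/a}=\frac{\ln a}{T\ln a}\cdots$; more directly $\frac{1}{a-1}\approx \frac1a=\frac{\ln a}{a\ln a}=\frac{1}{T\,F(T)}$ using $a=T\ln a$ and $\ln a\approx F(T)$. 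Collecting the error terms into a single $\delta(\varepsilon)$ and checking $\lim_{\varepsilon\to0}\delta(\varepsilon)=0$ completes the estimate. Adding the two bounds yields $\rob(X)+\lambda\cons(X)\ge (1-\varepsilon)T+\lambda\bigl(1+\frac{1}{T\,F(T)}\bigr)-\delta(\varepsilon)$.

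The main obstacle I anticipate is the reduction to geometric strategies in the first step, i.e.\ justifying that minimizing $\cons(X)$ subject to a fixed robustness (fixed geometric mean of the ratios) is achieved by equal ratios. Unlike the robustness, which via AM--GM only sees the geometric mean, the consistency sees the full sum of partial products, and a convexity or exchange argument is needed to show that spreading the ratios unequally can only increase this sum once the endpoint $x_n$ and the product are fixed. I would handle this by a perturbation argument: fixing the product $\prod\rho_j=R$ and $n$, perturb two consecutive ratios keeping their product constant and show the sum of partial products is minimized at equality (a standard Schur-convexity / smoothing step). The remaining asymptotic inversion of $T=a/\ln a$ is routine but must be done carefully to land exactly on $F(T)$, tracking that the $o(1)$ terms are uniform enough to be absorbed into $\delta(\varepsilon)$ as $\varepsilon\to0$ (equivalently $R\to\infty$, $n\to\infty$).
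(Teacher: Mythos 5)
There is a genuine gap, and it sits exactly where you anticipated trouble: the reduction to geometric strategies is false, and the Schur-convexity/smoothing argument you propose to repair it fails. Fix $n$ and the product $\prod_{j=0}^n \rho_j = x_n = R$, and perturb two consecutive ratios $\rho_k,\rho_{k+1}$ keeping $\rho_k\rho_{k+1}$ constant: the only partial product that changes is $x_k = x_{k-1}\rho_k$, which is monotone in $\rho_k$, so the sum of partial products is minimized not at equal ratios but by driving $\rho_k$ to its smallest admissible value, i.e., by pushing all the growth into the final ratio $\rho_n$. Hence at fixed $n$ (equivalently, fixed geometric mean of the ratios) the consistency can be made arbitrarily close to $1$, far below the geometric value $1+1/(a-1)$; what such strategies sacrifice is robustness, since $\rob(X)=\varepsilon\sum_i\rho_i$ then contains a term $\varepsilon\rho_n\approx\varepsilon R$, which your AM--GM bound $(1-\varepsilon)T$ cannot detect because it only sees the geometric mean. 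Consequently your two bounds --- robustness $\geq(1-\varepsilon)T$ and consistency $\geq 1+\frac{1}{T\,F(T)}$ --- cannot be established separately and added: the second simply does not hold for all strategies with $n+1$ bids, and the scalarized inequality must be proved by tracking the tradeoff between the last ratio and the rest jointly.

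That joint tracking is precisely the paper's proof, and it is the missing idea. The paper lower-bounds the consistency by keeping only its last nontrivial term, $\cons(X)\geq 1+\frac{1}{\rho_n}$, treats $a:=\rho_n$ as a free variable, and applies AM--GM only to the first $n$ ratios, whose product is now $x_{n-1}=R/a$; using $\varepsilon n=(1-\varepsilon)/\ln R^{1/n}$ this gives
\[
\rob(X)+\lambda\,\cons(X)\;\geq\;(1-\varepsilon)\,T\,a^{-1/n}+\varepsilon a+\lambda+\frac{\lambda}{a},
\]
so $a$ appears in all three nonconstant terms, and the lemma follows by minimizing the single-variable function $f(a)=(1-\varepsilon)Ta^{-1/n}+\varepsilon a+\lambda/a$: its stationary point satisfies $a^*\to(1-\varepsilon)mT/\varepsilon$ with $m=1/n$, and the inversion $m\leq\frac{\varepsilon}{1-\varepsilon}F(T)$ --- essentially the same asymptotic inversion of $T=a/\ln a$ that you carry out correctly in your second step --- yields $a^*\leq T\,F(T)$, hence $\lambda/a^*\geq \lambda/(T\,F(T))$, with all approximation errors absorbed into $\delta(\varepsilon)$. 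So your analytic endgame is sound and matches the paper, but it is applied to the wrong quantity: the ratio to be inverted is the \emph{last} ratio $\rho_n$ emerging from the joint optimization, not the common ratio $R^{1/(n+1)}$ of a geometric strategy. Without the decoupling step your argument proves the bound only for geometric strategies, which is insufficient for the later application through the minimax theorem, where the inequality is needed for every deterministic $X$. (A minor additional slip: the lemma's $T$ is defined with exponent $1/n$, while your robustness bound uses $1/(n+1)$; the paper reconciles the two quantities only later, in the proof of Theorem~\ref{thm:final.randomized.bound}.)
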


\begin{figure}[th!]
 \centering
    \includegraphics[width=0.5\linewidth]{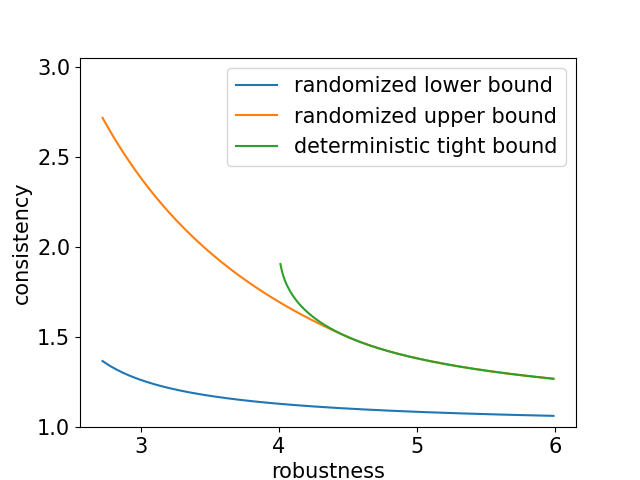}
    \caption{
    Comparison between the deterministic tight upper bound on the consistency, and the randomized upper and lower bound of Theorems~\ref{thm:randomized.performance} and~\ref{thm:final.randomized.bound}.
}
    \label{fig:comparison}
\end{figure}

We can now prove our lower bound with the help of Lemma~\ref{thm:main.stochastic}; specifically, by considering both sufficiently small and sufficiently large values of the parameter $\lambda$. 

\begin{theorem}
Any randomized learning-augmented algorithm
of robustness at most $r$ has consistency at least
$
1 + \frac{1}{(1+ \xi) r F(r)},
$
for any arbitrarily small, but constant $\xi>0$.
\label{thm:final.randomized.bound}
\end{theorem}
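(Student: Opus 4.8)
The plan is to combine the scalarized lower bound of Lemma~\ref{thm:main.stochastic} with the minimax theorem, and then to optimize over the scalarization weight $\lambda$. The reduction to a single objective is exactly what makes Yao's principle usable here, since it is the simultaneous control of the two incomparable quantities $\rob$ and $\cons$ that is otherwise awkward. First I would pass from deterministic to randomized strategies. Fix the prediction $\hat u = R$ and the target distribution $D_\varepsilon$, and for a deterministic $X$ write $\rho_\varepsilon(X)=\E_{u\sim D_\varepsilon}[\cost(X,u)/u]$. The crucial point is that the left-hand side of Lemma~\ref{thm:main.stochastic} is really this $D_\varepsilon$-average, which satisfies $\rho_\varepsilon(X)\le \rob(X)$ because robustness is a worst-case (supremum) quantity, and the number $T=T(n_X)=R^{1/n_X}/\ln R^{1/n_X}\ge e$ is determined by the length $n_X$ of $X$. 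A randomized algorithm $\mathcal A$ is a distribution over such $X$, and $\rob(\mathcal A)=\sup_u\E_X[\cost(X,u)/u]\ge \E_{u\sim D_\varepsilon}\E_X[\cost(X,u)/u]=\E_X[\rho_\varepsilon(X)]$, while $\cons(\mathcal A)=\E_X[\cons(X)]$ by linearity. Applying Lemma~\ref{thm:main.stochastic} inside the expectation yields, for every $\lambda>0$,
\[
\rob(\mathcal A)+\lambda\cons(\mathcal A)\ \ge\ \E_X\!\big[\rho_\varepsilon(X)+\lambda\cons(X)\big]\ \ge\ \min_{T\ge e}\Big[(1-\varepsilon)T+\lambda\big(1+\tfrac{1}{T F(T)}\big)\Big]-\delta(\varepsilon).
\]

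Next I would let $\varepsilon\to0$, using $\delta(\varepsilon)\to0$ and the fact that the achievable discrete values of $T=R^{1/n}/\ln R^{1/n}$ become dense in $[e,\infty)$ as $R\to\infty$, together with $\rob(\mathcal A)\le r$, to obtain the clean bound $r+\lambda\cons(\mathcal A)\ge B(\lambda)$ where $B(\lambda)=\lambda+\phi(\lambda)$ and $\phi(\lambda)=\min_{T\ge e}\big[T+\tfrac{\lambda}{T F(T)}\big]$. Rearranging gives $\cons(\mathcal A)\ge 1+\frac{\phi(\lambda)-r}{\lambda}$ for all $\lambda>0$, so it remains to show $\sup_{\lambda>0}\frac{\phi(\lambda)-r}{\lambda}\ge \frac{1}{(1+\xi)r F(r)}$. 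Here the structure of $\phi$ is favorable: it is concave in $\lambda$ (a minimum of functions affine in $\lambda$), and by the envelope theorem $\phi'(\lambda)=\frac{1}{T^*(\lambda)F(T^*(\lambda))}$, where $T^*(\lambda)$ is the minimizer. Setting the derivative of $\lambda\mapsto\frac{\phi(\lambda)-r}{\lambda}$ to zero gives the stationarity condition $\lambda\phi'(\lambda)=\phi(\lambda)-r$, which upon substitution collapses to $T^*(\lambda)=r$ (feasible precisely because $r\ge e$). I would therefore commit to the value $\lambda^*$ for which the minimizer of $T+\tfrac{\lambda}{T F(T)}$ is exactly $T=r$, determined by the first-order condition $\lambda^*=(rF(r))^2/(F(r)+rF'(r))$. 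At this point $\phi(\lambda^*)=r+\frac{\lambda^*}{rF(r)}$, whence $\frac{\phi(\lambda^*)-r}{\lambda^*}=\frac{1}{rF(r)}$, giving $\cons(\mathcal A)\ge 1+\frac{1}{rF(r)}$ in the exact limit.

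Finally I would account for the stated $(1+\xi)$ slack. The losses to absorb are the residual $\delta(\varepsilon)$ and the factor $(1-\varepsilon)$ at finite $\varepsilon$, the discreteness of the achievable $T(n)$, and the lower-order contribution $rF'(r)$ in $\lambda^*$, which is $O(1)$ and hence negligible against $rF(r)$. Examining the behaviour of $\frac{\phi(\lambda)-r}{\lambda}$ for both small $\lambda$ (where $\phi(\lambda)\to e<r$, so the ratio is negative) and large $\lambda$ (where $\phi(\lambda)\sim 2\sqrt{\lambda/F}$, so the ratio tends to $0$) confirms that the supremum is attained at the interior value $\lambda^*$ identified above, which justifies optimizing at a single $\lambda$ rather than characterizing $\phi$ globally; this same two-sided check lets me verify, for all $T\ge e$, the inequality $T+\frac{\lambda^*}{T F(T)}\ge r+\frac{\lambda^*}{(1+\xi)rF(r)}$ directly, delivering the claimed consistency lower bound for every constant $\xi>0$.

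I expect the main obstacle to be this last quantitative step: rigorously controlling the interchange of the $\varepsilon\to0$ limit with the minimization over $T$, establishing unimodality of $T\mapsto T+\frac{\lambda}{T F(T)}$ so that placing the minimizer at $T=r$ is legitimate, and bounding the error terms tightly enough that they are subsumed by $(1+\xi)$. The conceptual steps (scalarization, Yao's principle via $\rho_\varepsilon(X)\le\rob(X)$, and the stationarity identity $T^*=r$) are clean; the work lies in the asymptotic bookkeeping around the slowly varying function $F$ and its derivative.
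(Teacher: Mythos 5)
Your proposal is correct in substance and rests on the same skeleton as the paper's proof: the scalarized bound of Lemma~\ref{thm:main.stochastic}, applied per deterministic strategy and averaged over the algorithm's randomness (the minimax/Yao step, which you actually make more explicit than the paper does by observing that the left-hand side of the lemma is the $D_\varepsilon$-average $\rho_\varepsilon(X)\le\rob(X)$), followed by the constraint $\rob(\mathcal{A})\le r$ and a rearrangement in $\lambda$. Where you genuinely diverge is the endgame. The paper never optimizes $\lambda$: it splits into two cases according to whether $S=R^{1/(n+1)}/\ln R^{1/(n+1)}$ exceeds $(1+\xi)r$, dispatching the first case with an arbitrarily small $\lambda$ (there the robustness bound~\eqref{eq:rob.stoch} alone contradicts $r$-robustness) and the second with the crude choice $\lambda=r^2$, where the restriction $T<(1+\xi)r$ does the work. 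You instead keep a single unified minimization $\phi(\lambda)=\min_{T\ge e}\bigl[T+\lambda/(TF(T))\bigr]$ and locate the exactly optimal weight via the envelope identity $\lambda\phi'(\lambda)=\phi(\lambda)-r$, which collapses to $T^*(\lambda^*)=r$, i.e.\ $\lambda^*=(rF(r))^2/(F(r)+rF'(r))\approx r^2F(r)$, yielding $\cons\ge 1+1/(rF(r))$ in the exact limit --- nominally sharper than what the paper's Case 2 delivers (which, read closely, carries $F((1+\xi)r)$ rather than $F(r)$ before reabsorbing constants). What your route buys is this cleaner constant and the elimination of the case split, since large-$T$ strategies are subsumed automatically by $g(T)=T+\lambda^*/(TF(T))\ge T$; what it costs is precisely the obligation you flag yourself: one must prove that $g$ is unimodal on $[e,\infty)$ with a unique interior minimizer at $T=r$, which amounts to convexity of $T\mapsto 1/(TF(T))$, i.e.\ $2(u')^2>u\,u''$ for $u=TF(T)$ --- true for this $F$ because $TF'(T)=1+o(1)$ and $TF''(T)=O(1/T)$, but it requires verification, whereas the paper's case-split sidesteps any global analysis of $g$. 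Two small remarks: first, the density of the achievable values $T(n)$ plays no role --- since every $T(n)$ lies in $[e,\infty)$, relaxing to $\min_{T\ge e}$ is automatically valid in the lower-bound direction; second, when you apply Lemma~\ref{thm:main.stochastic} inside $\E_X$ you need $\delta(\varepsilon)$ to be uniform over the strategy length $n$, a point the paper glosses over in exactly the same way, so this is a shared caveat rather than a gap specific to your argument.
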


Figure~\ref{fig:comparison} compares the consistency of the best deterministic strategy and the lower bound of Theorem~\ref{thm:final.randomized.bound}, as a function of the robustness $r$.

\section{Further applications}
\label{sec:applications}

\subsection{Dynamic predictions}
\label{subsec:dynamic}

The methodology of Section~\ref{sec:pareto} allows us to obtain Pareto-optimal algorithms for online bidding and related problems, in a dynamic setting in which the oracle can provide multiple predictions in on-line manner. We will use, as illustration, the abstraction of {\em contract scheduling}~\cite{RZ.1991.composing}, a problem that is closely related to online bidding, and which offers a more intuitive interpretation of the dynamic setting. In the standard version of the problem (without  predictions), we seek a
schedule $X$ that is defined as an 
increasing sequence $X=(x_i)_{i=0}^\infty$. Here, $x_i$ is the length of the $i$-th {\em contract} that is executed in this schedule, and the objective is to 
minimize the expression
\[
r(X)=\sup_T \frac{T}{\ell(X,T)}, \ 
\text{ where $\ell(X,T)$ is the length of largest contract completed in $X$ by time $T$.}
\]
 Contract scheduling has several applications in real-time systems and interruptible computation~\cite{DBLP:journals/ai/ZilbersteinR96,aaai06:contracts,steins}. The learning-augmented version in which a (single) prediction on the interruption $T$ is given ahead of time
 is identical to bidding, in that both problems have the same Pareto front~\cite{DBLP:journals/jair/AngelopoulosK23}.

Consider now a dynamic version of 
contract scheduling, where the oracle provides progressively updated, i.e., improved predictions. Let $r$ denote the desired robustness requirement, then our setting can be formulated inductively as follows. Initially, the oracle provides its first prediction on the interruption time, say $\hat{u}_1$, and we seek an $r$-robust deterministic schedule that minimizes the consistency relative to this prediction. Let $X_1$ be this schedule. Inductively, let $X_{m}$ denote the {\em updated} schedule that the algorithm decides to follow after the oracle outputs the $m$-th prediction $\hat{u}_m$. Suppose that a new prediction $\hat{u}_{m+1}$, with $\hat{u}_{m+1}>\hat{u}_{m}$, becomes available at the completion of a contract in $X_m$, say at time $T_m$. The objective is then to define an updated schedule $X_{m+1}$ which starts its execution at time $T_m$, remains $r$-robust, and minimizes the consistency relative to $\hat{u}_{m+1}$. 

We can compute each updated schedule $X_m$ by applying an LP-based approach  along the lines of Algorithm~\ref{algo:pareto}. Moreover,  each schedule $X_m$ can be computed efficiently in time $O(\log \hat{u}_m)$.
We refer to Appendix~\ref{app:further} for details and the proof of the following theorem:
\begin{theorem}In the dynamic setting, 
there exists an algorithm for computing an incremental Pareto-optimal schedule of contracts, that runs in time polynomial in the size of each prediction.
\label{thm:contract}
\end{theorem}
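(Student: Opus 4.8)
The plan is to reduce each prediction update to a single-prediction instance that can be solved by a linear program analogous to $P_{\bf j}$, augmented with constraints encoding the fixed prefix of already-completed contracts. The key simplification relative to Section~\ref{sec:pareto} is that the oracle is single-valued, so the configuration of Definition~\ref{def:configuration} collapses to a single index $j$ with $x_j < \hat{u}_m \le x_{j+1}$; by Proposition~\ref{prop:configs} specialized to $k=1$, there are only $O(\log \hat{u}_m)$ valid configurations to examine.

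First I would formalize the inductive invariant maintained across updates. When the $m$-th prediction $\hat{u}_m$ arrives at the completion time $T$ of a contract, the algorithm has irrevocably committed to a prefix $(x_0,\dots,x_p)$ of contracts with $\sum_{q=0}^{p} x_q = T$. The continuation problem is to choose the remaining bids so that the full schedule is $r$-robust and minimizes $\cons(X,\hat{u}_m)=\frac{1}{\hat{u}_m}\sum_{q=0}^{j+1} x_q$. Because the prefix is fixed, its partial sum $S_p=\sum_{q=0}^{p} x_q$ enters the robustness recurrence~\eqref{eq:robust} purely as a constant offset in the cumulative term $\sum_{j\le i} x_j$, so each robustness constraint remains linear in the continuation variables.

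Next, for each guessed configuration index $j \ge p$ I would write the single-prediction analogue of $P_{\bf j}$: minimize $\sum_{q=0}^{j+1} x_q$ subject to the robustness constraints~\eqref{eq:robust} on all bids (the prefix values, which are $r$-robust by the inductive hypothesis, held fixed), the tight-extendability constraint of Lemma~\ref{lemma:ext.condition} imposed on the \emph{entire} partial strategy $(x_0,\dots,x_{j+1})$, and the configuration constraint $x_j \le \hat{u}_m \le x_{j+1}$. Solving this LP over all $O(\log \hat{u}_m)$ feasible $j$, retaining the cheapest solution, and appending the tight $r$-extension of Definition~\ref{def:extension} yields $X_m$. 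Correctness --- that $X_m$ is $r$-robust and minimizes consistency subject to the committed prefix --- follows exactly as in Theorem~\ref{thm:pareto.main}: by Lemma~\ref{lemma:aggr.extendable} the LP ranges over precisely the tightly $r$-extendable continuations of the prefix, and its objective is the consistency relative to $\hat{u}_m$.

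For the running time, each LP has $O(\log \hat{u}_m)$ variables and constraints and is solved in time polynomial in its size, and there are $O(\log \hat{u}_m)$ configurations; hence $X_m$ is computed in time polynomial in $\log \hat{u}_m$, i.e.\ in the bit-size of the prediction. The main obstacle I anticipate is twofold. First, one must verify that global $r$-robustness in the presence of the committed prefix is faithfully captured by treating $S_p$ and the last committed bid $x_p$ as constants, and in particular that the extendability condition of Lemma~\ref{lemma:ext.condition}, which depends on the full sum, remains a single linear inequality in the continuation variables. Second, one must pin down that ``incremental Pareto-optimality'' is the correct target --- namely that at each step $X_m$ minimizes consistency among all $r$-robust schedules that agree with the prior commitments --- and argue that greedy per-step optimization stays globally coherent. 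Since predictions only improve ($\hat{u}_{m+1}>\hat{u}_m$) and contracts completed before $T$ are irrevocable, no coupling across steps beyond the prefix arises, so the per-step LP delivers exactly this constrained optimum.
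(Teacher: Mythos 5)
Your overall architecture matches the paper's: for each update $m$, solve a family of $O(\log \hat{u}_m)$ LPs indexed by a single configuration index $j$, with the committed prefix entering the robustness recurrence only through constant offsets, the extendability condition of Lemma~\ref{lemma:ext.condition} imposed as a single linear inequality, and the tight $r$-extension of Definition~\ref{def:extension} appended at the end; the per-step (incremental) notion of Pareto-optimality and the runtime accounting are also exactly the paper's. However, there is a genuine gap: you optimize the wrong objective. Theorem~\ref{thm:contract} is stated for \emph{contract scheduling}, where performance at interruption time $T$ is $T/\ell(X,T)$ with $\ell(X,T)$ the largest contract completed by time $T$. Minimizing consistency relative to $\hat{u}_m$ therefore means \emph{maximizing} $x_j$, the largest contract that completes by time $\hat{u}_m$, subject to the time-budget constraint $T_{m-1}+\sum_{i=0}^{j} x_i \leq \hat{u}_m$ --- and this is precisely the paper's LP. You instead minimize $\sum_{q=0}^{j+1} x_q$ subject to the value-straddling constraint $x_j \leq \hat{u}_m \leq x_{j+1}$, which is the \emph{bidding} consistency with a single-valued prediction. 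These are different per-instance optimizations pulling in opposite directions (small total cost versus one large completed contract), and the discrepancy is not cosmetic: with a nonzero committed prefix $T_{m-1}$, the scheduling optimum wants to fill the remaining time $\hat{u}_m - T_{m-1}$ with as large a contract as possible, which your LP does not select.

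The equivalence you are implicitly leaning on --- that bidding and contract scheduling ``have the same Pareto front''~\cite{DBLP:journals/jair/AngelopoulosK23} --- is a statement about achievable (consistency, robustness) pairs for a single static prediction; it does not say that an optimal bidding sequence is an optimal contract schedule, and in the dynamic setting with an irrevocable prefix and shifted time origin $T_{m-1}$ it cannot be invoked to transfer optimality between the two objectives. Relatedly, in scheduling the prediction $\hat{u}_m$ is a \emph{time} and must be compared against cumulative sums $T_{m-1}+\sum_{i\leq j} x_i$, not against individual contract lengths as in your configuration constraint. Fixing both the objective (maximize $x_j$) and the configuration/completion constraint (time-based, as in the paper's constraints~\eqref{LP:cons-d} and~\eqref{LP:comp-d}, together with the initialization $x_0 \leq r\cdot{\tt last}(X_{m-1})-T_{m-1}$) recovers the paper's proof; the rest of your argument, including the prefix-as-constants observation and the correctness argument via Lemmas~\ref{lemma:aggr.extendable} and~\ref{lemma:ext.condition}, then goes through as you describe.
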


\subsection{Linear search}
\label{subsec:linear}

The results of Sections~\ref{sec:pareto} and~\ref{sec:randomized} can be extended to another well-known problem, namely searching for a hidden target in the infinite line. This is a fundamental problem from the theory of search games that goes back to works of Beck~\cite{beck:ls} and Bellman~\cite{bellman}, and 
has been studied extensively in TCS and OR; see~\cite{kranakis2025survey} for a recent survey. In this problem, a mobile {\em searcher} initially placed at some point $O$ of the infinite line, must locate an immobile {\em hider} that hides at some unknown position on the line. The objective is to devise a search strategy $S$ that minimizes the {\em competitive ratio}, defined by $\sup_h \cost(S,h)/|h|$, where $\cost(S,h)$ is the traveled distance of a searcher that follows $S$ the first time it finds $h$, and $|h|$ is the distance of $h$ from $O$.  The best deterministic competitive ratio is equal to 9~\cite{beck1970yet}, whereas the best randomized competitive ratio is $1+\min_{\alpha>1} \frac{1+\alpha}{\ln \alpha} \approx 4.6$~\cite{Gal80}. Pareto-optimal, deterministic strategies for a single {\em positional} prediction on the hider were given in~\cite{DBLP:journals/iandc/Angelopoulos23}.

We obtain Pareto-optimal strategies for a distributional prediction on the
position of the hider, thus generalizing the result of Alpern and Gal (Section 8.7 in~\cite{searchgames}) who gave a strategy of minimum expected search time, but without any robustness guarantees. Furthermore, we can extend the approach of Section~\ref{sec:randomized} to randomized search strategies with a positional prediction. 
 In comparison to online bidding, which is ``1-dimensional'',  the inherent difficulty in linear search is that the searcher operates in a ``2-dimensional'' space defined by the two halflines. We refer to Appendix~\ref{app:further} for details.

\section{Experimental evaluation}
\label{sec:exp}

We evaluate experimentally our main, Pareto-optimal algorithm, namely Algorithm~\ref{algo:pareto} of Section~\ref{sec:pareto}. 

\noindent
{\bf Datasets} \ 
The distributional predictions $\mu$ are composed of  pairs $(\mu_1,p_1)\dots (\mu_k,p_k)$. We fix $k=4$, and the support of $\mu$ is in the interval $[1,10^4]$.
The $p_i$ values are either uniform, i.e., all equal to $1/k$, or drawn uniformly
in $[0,1]$ then scaled so that they sum to 1. The values of $\mu_i$ are either drawn uniformly in $[1;10^4]$; or from a Gaussian distribution centered at the middle of this range, and truncated to the range. This Gaussian distribution is either ``narrow'', with a standard deviation $\sigma=2000$, or ``spread'', with $\sigma=4000$. In total, we consider two settings for the values of $p_i$ and three setting for the values of $\mu_i$, leading to six different datasets. Furthermore, each dataset is composed of 10 iid samples. We report results on datasets in which the $p_i$ are drawn uniformly and $\mu_i$ follows a Gaussian distribution, 
and refer to the Appendix~\ref{app:exp} for further experiments and discussion.

\noindent
{\bf Algorithms} \ 
We compare Algorithm~\ref{algo:pareto} to three heuristics of the form $(\lambda \cdot \rho^i)_{i\geq0}$.  Such heuristics are $r$-competitive if 
and only if $\rho\in[\zeta_1,\zeta_2]$ where $\zeta_1=\frac12(r-\sqrt{r(r-4)})$ and $\zeta_2=\frac12(r+\sqrt{r(r-4)})$~\cite{angelopoulos2024online}. The three chosen heuristics correspond to $\rho\in\{\zeta_1,\frac12r,\zeta_2\}$, that is, to the extrema and the middle of the feasibility range for $\rho$, and $\lambda$ is chosen, for each heuristic, so as to minimize the expected cost over $\mu$. The choice of the base $\rho$ controls how fast the heuristic increases its geometric bids. In particular, the heuristic with $\rho=\zeta_2$ optimizes the consistency assuming a singled-valued prediction~\cite{angelopoulos2024online}.

\begin{figure}[tbp]
    \centering
    \subfloat[Gaussian $\mu_i$, $\sigma=2000$]{\includegraphics[width=0.5\linewidth]{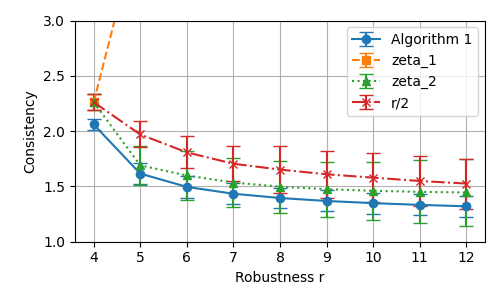}}
    \subfloat[Gaussian $\mu_i$, $\sigma=4000$]{\includegraphics[width=0.5\linewidth]{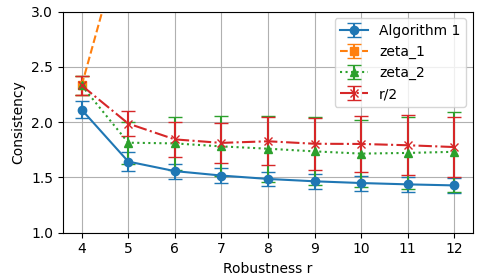}}
    \caption{Empirical consistency as a function of the robustness requirement $r$.
    }
    \label{fig:plot_u}
\end{figure}
\begin{figure}[tbp]
    \centering
    \subfloat[$\mu_2 = \frac12 \mu_1 + \frac r4\mu_1$]{\includegraphics[width=0.5\linewidth]{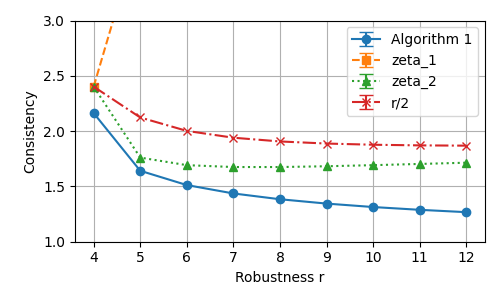}}
    \subfloat[$\mu_2 = \frac12 \mu_1 + \frac {\zeta_2} 2 \mu_1 $]{\includegraphics[width=0.5\linewidth]{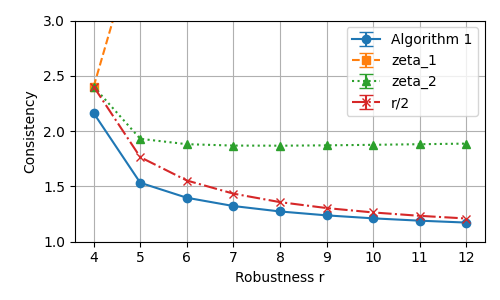}}
    \caption{Empirical consistency
    when $\mu$ is comprised of two equiprobable points, with  $\mu_1=5000$ and $\mu_2$ generated as specified in the caption. (a) is adversarial for $\rho=r/2$ and (b) for $\rho=\zeta_2$.
    }
    \label{fig:plot_adv}
\end{figure}

\noindent
{\bf Results} \ 
For each of the 6 datasets, we consider integer values of $r$ within $[4,12]$ and report, for each algorithm, the average consistency as well as the standard deviation over the 10 samples (error bars). The computation was run on an AMD EPYC 7302 processor using 8 threads. The average runtime for each of the 540 instances was 11.8s. Figure~\ref{fig:plot_u} depicts the obtained results.


\noindent
{\bf Analysis} \
The experiments show that the heuristic with $\rho=\zeta_1$ is very inefficient, and its consistency increases near-linearly with $r$ (Appendix~\ref{sec:exp}). This is because its bids increase very slowly as a function of $r$. We thus focus on the remaining algorithms. For all $r$, Algorithm~\ref{algo:pareto} has better consistency than all heuristics, thus confirming Theorem~\ref{thm:pareto.main}. It also has better variance, as observed in the error bars, since it outputs optimal solutions to each instance. In contrast, increasing the heuristic base increases their variance, since the bids grow more rapidly, thus rendering the heuristic less flexible.
This also explains why the  performance gap increases with the standard deviation of $\mu$ in the dataset. 

Furthermore, as shown in Figure~\ref{fig:plot_adv},
for each heuristic there exist very simple predictions comprised of only two equiprobable points for which the heuristic has performance that is much inferior to that of our algorithm. This can be explained by the fact that, in the known heuristics, the bids follow a fixed geometric rule and cannot adapt to the specificity of the prediction, even when the latter is extremely simple. In particular, for large $r$, the computations showed that the consistency of Algorithm~\ref{algo:pareto} monotonically approaches 1, whereas the consistency of the heuristics tends to 2. This gap is large, because it implies that the geometric heuristics pay a cost equal to that of for finding the largest point, thus getting no benefit whatsoever from the distributional prediction. In Appendix~\ref{app:exp} we demonstrate that this gap is unbounded for large $r$, even for simple predictions over two points.


\vspace{-0.1cm}
\section{Conclusion}
\label{sec:conclusion}

We gave the first systematic study of online bidding in stochastic settings where either the prediction is distributional, or the algorithm is randomized. Our results quantify the power and limitations of randomness, an issue that remains underexplored in the literature, and is bound  to become more prevalent in the broader study of learning-augmented algorithms.
Our  techniques may be applicable to several further extensions, e.g, in searching of a hidden target in a star environment~\cite{jaillet:online} or in layered graphs~\cite{fiat1998competitive}. They can also apply to settings outside learning-augmented computation, as we demonstrated in Section~\ref{subsec:dynamic}. Here, an interesting direction for future work is to apply LP-based approaches towards the study of {\em barely random} algorithms for online bidding or linear search. For instance, what is the best competitive ratio that can be achieved with only one random bit? This is motivated by recent studies of limited randomness is other well-known problems, e.g., $k$-server~\cite{cosson2024barely}.

\section{Acknowledgments}

We are thankful to Thomas Lidbetter for many helpful discussions in connection to the randomized setting. 

This work was supported by the grant ANR-23-CE48-0010 PREDICTIONS from the French National Research Agency (ANR).

\bibliographystyle{plain}
\bibliography{bidding-biblio}

\begin{thebibliography}{10}

\bibitem{predictionslist}
Repository of works on algorithms with predictions.
\newblock \url{https://algorithms-with-predictions.github.io}, 2025.
\newblock Accessed: 2025-01-01.

\bibitem{almanza2021online}
Matteo Almanza, Flavio Chierichetti, Silvio Lattanzi, Alessandro Panconesi, and Giuseppe Re.
\newblock Online facility location with multiple advice.
\newblock {\em Advances in Neural Information Processing Systems}, 34:4661--4673, 2021.

\bibitem{searchgames}
Steve Alpern and Shmuel Gal.
\newblock {\em The theory of search games and rendezvous}.
\newblock Kluwer Academic Publishers, 2003.

\bibitem{anand2021regression}
Keerti Anand, Rong Ge, Amit Kumar, and Debmalya Panigrahi.
\newblock A regression approach to learning-augmented online algorithms.
\newblock {\em Advances in Neural Information Processing Systems}, 34:30504--30517, 2021.

\bibitem{DBLP:journals/iandc/Angelopoulos23}
Spyros Angelopoulos.
\newblock Online search with a hint.
\newblock {\em Inf. Comput.}, 295(Part {B}):105091, 2023.

\bibitem{DBLP:journals/corr/abs-2111-05281}
Spyros Angelopoulos, Diogo Ars{\'{e}}nio, and Shahin Kamali.
\newblock Competitive sequencing with noisy advice.
\newblock {\em CoRR}, abs/2111.05281, 2021.

\bibitem{DBLP:conf/ijcai/0001BD024}
Spyros Angelopoulos, Marcin Bienkowski, Christoph D{\"{u}}rr, and Bertrand Simon.
\newblock Contract scheduling with distributional and multiple advice.
\newblock In {\em Proceedings of the Thirty-Third International Joint Conference on Artificial Intelligence, {IJCAI}}, pages 3652--3660, 2024.

\bibitem{angelopoulos2024online}
Spyros Angelopoulos, Christoph D{\"u}rr, Shendan Jin, Shahin Kamali, and Marc Renault.
\newblock Online computation with untrusted advice.
\newblock {\em Journal of Computer and System Sciences}, 144:103545, 2024.

\bibitem{DBLP:journals/jair/AngelopoulosK23}
Spyros Angelopoulos and Shahin Kamali.
\newblock Contract scheduling with predictions.
\newblock {\em J. Artif. Intell. Res.}, 77:395--426, 2023.

\bibitem{angelopoulos2024search}
Spyros Angelopoulos, Thomas Lidbetter, and Konstantinos Panagiotou.
\newblock Search games with predictions.
\newblock {\em arXiv preprint arXiv:2401.01149}, 2024.

\bibitem{antoniadis2020secretary}
Antonios Antoniadis, Themis Gouleakis, Pieter Kleer, and Pavel Kolev.
\newblock Secretary and online matching problems with machine learned advice.
\newblock {\em Advances in Neural Information Processing Systems}, 33:7933--7944, 2020.

\bibitem{reservationstrategies}
Guillaume Aupy, Ana Gainaru, Valentin Honor{\'e}, Padma Raghavan, Yves Robert, and Hongyang Sun.
\newblock Reservation strategies for stochastic jobs.
\newblock In {\em 2019 IEEE International Parallel and Distributed Processing Symposium (IPDPS)}, pages 166--175. IEEE, 2019.

\bibitem{azar2005line}
Yossi Azar.
\newblock On-line load balancing.
\newblock {\em Online algorithms: the state of the art}, pages 178--195, 2005.

\bibitem{azar2022online}
Yossi Azar, Debmalya Panigrahi, and Noam Touitou.
\newblock Online graph algorithms with predictions.
\newblock In {\em Proceedings of the 2022 Annual ACM-SIAM Symposium on Discrete Algorithms (SODA)}, pages 35--66. SIAM, 2022.

\bibitem{bai2023sorting}
Xingjian Bai and Christian Coester.
\newblock Sorting with predictions.
\newblock {\em Advances in Neural Information Processing Systems}, 36:26563--26584, 2023.

\bibitem{bamas2020primal}
Etienne Bamas, Andreas Maggiori, and Ola Svensson.
\newblock The primal-dual method for learning augmented algorithms.
\newblock {\em Advances in Neural Information Processing Systems}, 33:20083--20094, 2020.

\bibitem{beck:ls}
Anatole Beck.
\newblock On the linear search problem.
\newblock {\em Naval Research Logistics}, 2:221--228, 1964.

\bibitem{beck1970yet}
Anatole Beck and Donald~J Newman.
\newblock Yet more on the linear search problem.
\newblock {\em Israel journal of mathematics}, 8(4):419--429, 1970.

\bibitem{bellman}
Richard Bellman.
\newblock An optimal search problem.
\newblock {\em SIAM Review}, 5:274, 1963.

\bibitem{benomar2025pareto}
Ziyad Benomar, Lorenzo Croissant, Vianney Perchet, and Spyros Angelopoulos.
\newblock Pareto-optimality, smoothness, and stochasticity in learning-augmented one-max-search.
\newblock {\em arXiv preprint arXiv:2502.05720}, 2025.

\bibitem{benomar2025tradeoffs}
Ziyad Benomar and Vianney Perchet.
\newblock On tradeoffs in learning-augmented algorithms, January 2025.
\newblock arXiv:2501.12770 [cs].

\bibitem{steins}
Daniel~S. Bernstein, Lev Finkelstein, and Shlomo Zilberstein.
\newblock Contract algorithms and robots on rays: Unifying two scheduling problems.
\newblock In {\em Proceedings of the 18th International Joint Conference on Artificial Intelligence (IJCAI)}, pages 1211--1217, 2003.

\bibitem{borodin1998online}
Allan Borodin and Ran El-Yaniv.
\newblock {\em Online computation and competitive analysis}.
\newblock {C}ambridge {U}niversity {P}ress, 1998.

\bibitem{canonne2025little}
Cl{\'e}ment~L Canonne, Kenny Chen, and Juli{\'a}n Mestre.
\newblock With a little help from my friends: Exploiting probability distribution advice in algorithm design.
\newblock {\em arXiv preprint arXiv:2505.04949}, 2025.

\bibitem{charikar1997incremental}
Moses Charikar, Chandra Chekuri, Tom{\'a}s Feder, and Rajeev Motwani.
\newblock Incremental clustering and dynamic information retrieval.
\newblock In {\em Proceedings of the twenty-ninth annual ACM symposium on Theory of computing}, pages 626--635, 1997.

\bibitem{DBLP:conf/aistats/ChristiansonSW23}
Nicolas Christianson, Junxuan Shen, and Adam Wierman.
\newblock Optimal robustness-consistency tradeoffs for learning-augmented metrical task systems.
\newblock In {\em {AISTATS}}, volume 206 of {\em Proceedings of Machine Learning Research}, pages 9377--9399. {PMLR}, 2023.

\bibitem{chrobak2008incremental}
Marek Chrobak, Claire Kenyon, John Noga, and Neal~E Young.
\newblock Incremental medians via online bidding.
\newblock {\em Algorithmica}, 50:455--478, 2008.

\bibitem{chrobak2006sigact}
Marek Chrobak and Claire Kenyon-Mathieu.
\newblock Competitiveness via doubling.
\newblock {\em ACM SIGACT News}, 37(4):115--126, 2006.

\bibitem{cosson2024barely}
Romain Cosson and Laurent Massouli{\'e}.
\newblock Barely random algorithms and collective metrical task systems.
\newblock In {\em Conference on Neural Information Processing Systems}, 2024.

\bibitem{diakonikolas2021learning}
Ilias Diakonikolas, Vasilis Kontonis, Christos Tzamos, Ali Vakilian, and Nikos Zarifis.
\newblock Learning online algorithms with distributional advice.
\newblock In {\em International Conference on Machine Learning}, pages 2687--2696. PMLR, 2021.

\bibitem{DBLP:conf/nips/DinitzILMNV24}
Michael Dinitz, Sungjin Im, Thomas Lavastida, Benjamin Moseley, Aidin Niaparast, and Sergei Vassilvitskii.
\newblock Binary search with distributional predictions.
\newblock In {\em Advances in Neural Information Processing Systems 38: Annual Conference on Neural Information Processing Systems 2024, NeurIPS}, 2024.

\bibitem{dutting2024secretaries}
Paul D{\"u}tting, Silvio Lattanzi, Renato Paes~Leme, and Sergei Vassilvitskii.
\newblock Secretaries with advice.
\newblock {\em Mathematics of Operations Research}, 49(2):856--879, 2024.

\bibitem{elenter2024overcoming}
Alex Elenter, Spyros Angelopoulos, Christoph D{\"u}rr, and Yanni Lefki.
\newblock Overcoming brittleness in pareto-optimal learning augmented algorithms.
\newblock In {\em The Thirty-eighth Annual Conference on Neural Information Processing Systems}, 2024.

\bibitem{fiat1998competitive}
Amos Fiat, Dean~P Foster, Howard Karloff, Yuval Rabani, Yiftach Ravid, and Sundar Vishwanathan.
\newblock Competitive algorithms for layered graph traversal.
\newblock {\em SIAM Journal on Computing}, 28(2):447--462, 1998.

\bibitem{Gal80}
Shmuel Gal.
\newblock {\em Search Games}.
\newblock Academic Press, 1980.

\bibitem{DBLP:conf/icml/GollapudiP19}
Sreenivas Gollapudi and Debmalya Panigrahi.
\newblock Online algorithms for rent-or-buy with expert advice.
\newblock In {\em Proceedings of the 36th International Conference on Machine Learning, {ICML}}, volume~97 of {\em Proceedings of Machine Learning Research}, pages 2319--2327. {PMLR}, 2019.

\bibitem{im2021online}
Sungjin Im, Ravi Kumar, Mahshid Montazer~Qaem, and Manish Purohit.
\newblock Online knapsack with frequency predictions.
\newblock {\em Advances in Neural Information Processing Systems}, 34:2733--2743, 2021.

\bibitem{im2023online}
Sungjin Im, Benjamin Moseley, Chenyang Xu, and Ruilong Zhang.
\newblock Online state exploration: Competitive worst case and learning-augmented algorithms.
\newblock In {\em Joint European Conference on Machine Learning and Knowledge Discovery in Databases}, pages 333--348. Springer, 2023.

\bibitem{jaillet:online}
Patrick Jaillet and Matthew Stafford.
\newblock Online searching.
\newblock {\em Operations Research}, 49:234--244, 1993.

\bibitem{kranakis2025survey}
Evangelos Kranakis.
\newblock A survey of the impact of knowledge on the competitive ratio in linear search.
\newblock In {\em Stabilization, Safety, and Security of Distributed Systems: 26th International Symposium}, volume 14931, page~23. Springer Nature, 2025.

\bibitem{lattanzi2020online}
Silvio Lattanzi, Thomas Lavastida, Benjamin Moseley, and Sergei Vassilvitskii.
\newblock Online scheduling via learned weights.
\newblock In {\em Proceedings of the Fourteenth Annual ACM-SIAM Symposium on Discrete Algorithms}, pages 1859--1877. SIAM, 2020.

\bibitem{DBLP:conf/eenergy/Lee0HL24}
Russell Lee, Bo~Sun, Mohammad Hajiesmaili, and John C.~S. Lui.
\newblock Online search with predictions: Pareto-optimal algorithm and its applications in energy markets.
\newblock In {\em The 15th {ACM} International Conference on Future and Sustainable Energy Systems, e-Energy 2024, Singapore, June 4-7, 2024}, pages 50--71. {ACM}, 2024.

\bibitem{lin2022learning}
Honghao Lin, Tian Luo, and David Woodruff.
\newblock Learning augmented binary search trees.
\newblock In {\em International Conference on Machine Learning}, pages 13431--13440. PMLR, 2022.

\bibitem{aaai06:contracts}
Alejandro L\'opez-Ortiz, Spyros Angelopoulos, and Angele Hamel.
\newblock Optimal scheduling of contract algorithms for anytime problem-solving.
\newblock {\em J. Artif. Intell. Res.}, 51:533--554, 2014.

\bibitem{DBLP:journals/jacm/LykourisV21}
Thodoris Lykouris and Sergei Vassilvitskii.
\newblock Competitive caching with machine learned advice.
\newblock {\em J. {ACM}}, 68(4):24:1--24:25, 2021.

\bibitem{NIPS2018_8174}
Manish Purohit, Zoya Svitkina, and Ravi Kumar.
\newblock Improving online algorithms via {ML} predictions.
\newblock In {\em Proceedings of the 31st Annual Conference on Neural Information Processing Systems {(NIPS)}}, pages 9661--9670, 2018.

\bibitem{RZ.1991.composing}
Stuart~J. Russell and Shlomo Zilberstein.
\newblock Composing real-time systems.
\newblock In {\em Proceedings of the 12th International Joint Conference on Artificial Intelligence (IJCAI)}, pages 212--217, 1991.

\bibitem{sun_pareto-optimal_2021}
Bo~Sun, Russell Lee, Mohammad Hajiesmaili, Adam Wierman, and Danny Tsang.
\newblock Pareto-{Optimal} {Learning}-{Augmented} {Algorithms} for {Online} {Conversion} {Problems}.
\newblock In {\em Advances in {Neural} {Information} {Processing} {Systems}}, volume~34, pages 10339--10350. Curran Associates, Inc., 2021.

\bibitem{wei2020optimal}
Alexander Wei and Fred Zhang.
\newblock Optimal robustness-consistency trade-offs for learning-augmented online algorithms.
\newblock In {\em Proceedings of the 33rd Conference on Neural Information Processing Systems ({NeurIPS})}, 2020.

\bibitem{yao1977probabilistic}
Andrew Chi-Chin Yao.
\newblock Probabilistic computations: Toward a unified measure of complexity.
\newblock In {\em 18th Annual Symposium on Foundations of Computer Science (sfcs 1977)}, pages 222--227. IEEE Computer Society, 1977.

\bibitem{DBLP:journals/ai/ZilbersteinR96}
Shlomo Zilberstein and Stuart~J. Russell.
\newblock Optimal composition of real-time systems.
\newblock {\em Artif. Intell.}, 82(1-2):181--213, 1996.

\end{thebibliography}

\appendix
\noindent
{\Large \bf Appendix}

\section{Proofs from Section~\ref{sec:pareto}}
\label{app:pareto}

\begin{proof}[Proof of Proposition~\ref{prop:configs}]
Let $X=(x_i)_{i=0}^\infty$ denote an $r$-robust schedule, which thus satisfies~\eqref{eq:robust}. From previous studies of this linear recurrence inequality~\cite{DBLP:journals/corr/abs-2111-05281}, we know that for any $u\geq 1$, there exists an index $l$ such that $x_l\geq u$, where $l=O(\log_{\zeta_1} u)$, and 
\[
\zeta_1=\frac{r-\sqrt{r^2-4r}}{2}.
\]
Note that $\zeta_1$ is constant, since $r$ is assumed to be constant. From Definition~\ref{def:configuration}, it follows that the $i$-th component of a valid configuration has value $O(\log \mu_i)$, which concludes the proof.

We note that this bound is tight, in that from~\cite{DBLP:journals/corr/abs-2111-05281} it holds that $l \in \Omega(\log_{\zeta_2} (u))$, where $\zeta_2=\frac{r+\sqrt{r^2-4r}}{2}$.
\end{proof}

\bigskip

\begin{proof}[Proof of Lemma~\ref{lemma:aggr.extendable}]
Suppose that $Y$ is $r$-extendable, then there exists a strategy $Z$ that $r$-extends $Y$. Let $j\in \mathbb{N}$ be the smallest integer such that 
$z_i <r z_{i-1} - \sum_{j< i} z_j$, and let $\delta=(r z_{i-1} - \sum_{j< i} z_j)/z_i>1$. Define the strategy 
\[
Z'=(y_0, \ldots y_l, z_0,\ldots  z_{i-1}, \delta z_{i}, \delta z_{i+1}, \delta z_{i+2}, \ldots )
\]
then we remark that $Z'$ is also $r$-robust. Indeed, for $q> i$, we have, as $Z$ $r$-extends $Y$ :
\begin{align*}
    \delta z_q &\leq \delta (rz_{q-1}-\sum_{j\leq l} y_j-\sum_{j<q}z_j)\\
    &\leq r(\delta z_{q-1})-\sum_{j\leq l} y_j-\sum_{j=0}^{i-1}z_j-\sum_{j=i}^{q-1}\delta z_j.
\end{align*}
Hence, $Z'$ is $r$-robust and satisfies~\eqref{eq:aggressive.z} for $i=1$. By repeating this process iteratively, we obtain a sequence that converges, at the limit, to the the tight $r$-extension of $Y$.

For the opposite direction, it suffices to note that a tightly $r$-extendable strategy is $r$-extendable by definition.
\end{proof}

\bigskip

\begin{proof}[Proof of Lemma~\ref{lemma:ext.condition}]
Consider a partially $r$-robust partial strategy $Y=(y_i)_{i=0}^{l}$. Note that if $Y$ is not partially $r$-robust, it cannot be $r$-extendable.
By scaling all values $y_i$, we can assume, without loss of generality, that $\sum_{i=0}^l y_i = r$, as this simplifies the formulas.
From Lemma~\ref{lemma:aggr.extendable}, $Y$ is $r$-extendable if and only if it has a tight $r$-extension, say $Z$. We will describe explicitly the constraints that must be imposed on $Z$, in order for the sequence $Y \cup Z$ to be a valid tight $r$-extension of $Y$. 

The first bid of $Z$, namely $z_0$, must satisfy the condition $r\cdot y_l = \sum_{i=0}^l y_i  + z_0 = r+z_0$, hence we have $z_0=r(y_l-1)$. The second bid, $z_1$, must satisfy $r\cdot z_0 = \sum_{i=0}^l y_i + z_0 + z_1 = r\cdot y_l+z_1$ hence $z_1=r(z_0-y_l)$. For each subsequent bid $z_i$, with $i>1$, it must hold that  $z_i = r(z_{i-1}-z_{i-2})$, which can be extended to all $i\in\mathbb N$ with the initial conditions $z_{-1}=y_l$ and $z_{-2}=1$. We have thus defined a linear recurrence relation, whose solution is given by the expression
\begin{align}
z_ {n-2} &= \frac{2^{n-1}}{r'} \left(r'\left((r+r')^n+(r-r')^n\right)-(r-2y_l)\left((r+r')^n-(r-r')^n\right)\right),
\label{eq:recursion}
\end{align}
where $r'=\sqrt{r-4}\sqrt{r}< r$.

\smallskip

The series $z$ describes a valid strategy if and only if it is increasing and unbounded. We consider two cases. 

 Suppose first that $r'< r-2y_l$. 
For $n$ large enough, the terms of the expression of $z_{n-2}$ in $(r+r')^n$ dominate the ones in $(r-r')^n$, so $z_{n-2}$ becomes negative and $z$ is not a valid strategy.

Next, suppose that $r'\geq r-2y_l$, and  we consider two further cases. Suppose first that $r\leq 2y_l$. In this case, both terms of~\eqref{eq:recursion} are positive, so $z$ is increasing and a valid strategy. For the next subcase, suppose that $r>2y_l$, then it follows that the second term of~\eqref{eq:recursion} is negative. Since $r'\geq r-2y_l$, we also have $(r+r')^n+(r-r')^n>(r+r')^n-(r-r')^n$ so $z_{n-2}>\frac{2^{n-1}}{r'}(r'-r+2y_l)\left((r+r')^n+(r-r')^n\right)$, which is increasing in $n$ and unbounded, so $z$ is a valid strategy.

Consequently, $z$ is a valid strategy if and only if $r'\geq r-2y_l$, which is equivalent to:
\begin{align*}
    2y_l &\geq r-r'=r-\sqrt{r(r-4)}\\
    \Leftrightarrow \frac{\sum_{i=0}^l y_i}{y_l} &\leq \frac{2r}{r-\sqrt{r(r-4)}}
    = \frac{2r(r+\sqrt{r(r-4)})}{r^2-r^2+4r} = \frac{r+\sqrt{r(r-4)}}{2}.
\end{align*}

Therefore, we have shown that a partially $r$-robust strategy $Y=(y_i)_{i=0}^{l}$ admits an $r$-extension $Z$ if and only if $\frac{\sum_{i=0}^l y_i}{y_l} \leq \frac{r+\sqrt{r(r-4)}}{2}$, which completes the proof. 

In this proof we assumed the general case $r>4$. For $r=4$ the recurrence relation is slightly different, since the characteristic polynomial has a double root; we do not show details, since this case is studied in~\cite{DBLP:conf/ijcai/0001BD024}. 
\end{proof}

\begin{proof}[Proof of Theorem~\ref{thm:pareto.main}]    
    Let $Y=(y_i)_{i\geq 0}$ denote a Pareto-optimal strategy, i.e., a strategy that is $r$-robust and has minimum consistency. Let ${\bf l}=(l_1, \ldots l_m)$ be the configuration of $Y$ according to $\mu$. Since $Y$ is $r$-robust, the sequence $Y'=(y_i)_{i=0}^{l_m}$ is partially $r$-robust, as well as $r$-extendable. From Lemma~\ref{lemma:ext.condition}, this implies that $Y'$ satisfies all the LP constraints~\eqref{LP:comp}, \eqref{LP:aggr} and\eqref{LP:monotone}. Hence, in Step 2, Algorithm~\ref{algo:pareto} computes a partial strategy $x^*$ for some configuration $\bm{j}^*$ for which 
$C:=\sum_{i=1}^k p_i \sum_{q=0}^{j_i+1}x^*_{q}$ is at most the consistency of $Y$. Furthermore, since $x^*$ satisfies~\eqref{LP:aggr}, from Lemmas~\ref{lemma:aggr.extendable} and~\ref{lemma:ext.condition} it is aggressively $r$-extendable. Therefore, we conclude that the strategy returned by the algorithm is $r$-robust, and has consistency at most that of $Y$, hence it is Pareto-optimal.
\end{proof}

\newcommand{\qmu}{\ensuremath{\bar\mu}\xspace}
\begin{proof}[Proof of Theorem~\ref{thm:quant}]  
Let $F_\mu$ denote the cumulative distribution function of the prediction $\mu$. Since $\mu$ has support in the interval $[m,M]\subset\mathbb R_{\geq 1}$, we have $F_\mu(m)=0$ and $F_\mu(M)=1$. We can assume, for simplicity, that $M> e\cdot m$ hence $M/m=\Omega(1)$, as we focus on the asymptotic complexity of the algorithm.

    We define the distribution \qmu as a quantization of $\mu$ into $k=\lceil{c\log(M/m)}\rceil$ points (or ``levels'') $\qmu_1\dots\qmu_k$ of value $\qmu_i = m\cdot e^{i/c}$. Note that $\qmu_k \geq m\cdot e^{\log(M/m)}\geq M$. The probability $p_i$ associated with $\qmu_i$ is defined to be equal to $F(\qmu_i)-F(\qmu_{i-1})$ (assuming $\mu_0=m$). Intuitively, \qmu is obtained from $\mu$ by shifting probability mass to the right, by a maximum multiplicative factor of $e^{1/c}$, and results in $O(\log (M/m))$ points.


Consider running Algorithm~\ref{algo:pareto}
on $\qmu$. The space of all configurations is now more restricted, in comparison to the setting of Section~\ref{sec:pareto}. Specifically, as discussed in the proof of Proposition~\ref{prop:configs}, we know that for any $r$-robust strategy $X=(x_i)_{i=0}^\infty$, the first index $i$ such  that $x_i\geq \qmu_1=m\cdot e^{1/c}$ must satisfy $i=O(\log m)$. Hence, we can restrict the execution of Algorithm~\ref{algo:pareto} on configurations $\bold j=(j_1,\dots,j_k)$ for which $j_1=O(\log m)$. Similarly, we can also restrict $\bold j$ for which $j_k-j_1 = O(\log (M/m))$. This results in a total of configurations that is at most
    \[O\left(\log m \cdot \prod_{i=2}^k (j_i-j_{i-1})\right).
    \]
    We now bound $\prod_{i=2}^k (j_i-j_{i-1})$ using the fact that there exists a constant $C\geq c$ for which $\sum_{i=2}^k (j_i-j_{i-1}) = j_k-j_1 \leq C \lceil\log (M/m)\rceil$, and the AM-GM inequality:

    \begin{align*}
    \prod_{i=2}^k (j_i-j_{i-1}) &\leq \left(\frac{j_k-j_1}{k-1}\right)^{k-1} \tag{AM-GM inequality}\\
    &\leq \left(\frac{C \lceil\log \frac Mm\rceil}{\lceil{c\log\frac Mm}\rceil-1}\right)^{{\lceil{c\log \frac Mm}\rceil-1}} \tag{definition of $k$ and upper bound on $j_k-j_1$}\\        
    & \leq \left(\frac{2 C \log \frac Mm}{\frac c2 \log\frac Mm}\right)^{{2c\log \frac Mm}} \tag{follows from $\log\frac Mm>1$ and $C>c>1$}\\        
    & \leq \left(\frac{4 C }{c}\right)^{2c\log \frac Mm} \leq \left(\frac Mm\right)^{2c\log\frac {4C}{c}}= \left(\frac Mm\right)^{O(1)}.
    \end{align*}

    The complexity to solve each LP in Algorithm~\ref{algo:pareto} is polynomial in the number of variables and constraints, so polynomial in $k=O(\log(M/m))$ and $j_k=O(\log m +\log (M/m))$, hence the total complexity is polynomial in $O(\frac Mm + \log m)$.

    Regarding the consistency of the strategy $X$ returned by this procedure, we have:

    \begin{align*}
        \cons(X,\mu) &= \frac{\E_{z \sim \mu}[\textrm{cost}(X,z)]}{\E_{z \sim \mu}[z]} \tag{definition of consistency}\\
        &\leq \frac{\E_{z \sim \qmu}[\textrm{cost}(X,z)]}{\E_{z \sim \mu}[z]} \tag{follows from the definition of $\qmu$ and monotonicity of \cost}\\
        &\leq \frac{\cons(X,\qmu)\cdot \E_{z \sim \qmu}[z]}{\E_{z \sim \mu}[z]} \tag{definition of consistency}\\
        &\leq \cons(X,\qmu)\cdot e^{1/c}. \tag{follows from the definition of $\qmu$}\\
    \end{align*}

    The first inequality comes from the fact that probability masses have only be moved to the right from $\mu$ to $\qmu$ and the function $z\mapsto\cost(X,z)$ is non-decreasing. The second is the definition of $\cons(X,\qmu)$. And the last one comes from fact that no probability mass has been shifted by a multiplicative factor larger than $e^{1/c}$ from $\mu$ to $\qmu$.

    Therefore, as Algorithm~\ref{algo:pareto}offers the best consistency / robustness tradeoff, this procedure results in consistency that is within a factor at most $e^{1/c}$ from the best consistency achievable in the worst-case, assuming robustness $r$. 
\end{proof}

\section{Proofs from Section~\ref{sec:randomized}}
\label{app:randomized}

\begin{proof}[Proof of Proposition~\ref{prop:e}]
Let $\delta \in [0,1)$ be such that for the unknown target $u$, it holds that $u=e^{j+\delta}$, where $j \in \mathbb{N}$. Then $X$ will either locate $u$ in iteration $j$ with probability $1-\delta$, or in iteration $j+1$ with probability $\delta$. 
Hence
\begin{align*}
\E[\cost(X,u)] &= \frac{1}{e^{j+\delta}}
\left( \E[\frac{e^{j+1+s}-1}{e-1}| s\geq \delta] \Pr[s\geq \delta]+
\E[\frac{e^{j+2+s}-1}{e-1}| s\leq \delta]
\Pr[s\leq \delta]
\right) \nonumber \\
&= \frac{e}{e^{\delta}(e-1)} 
\left(
\E[e^s|s\geq \delta] (1-\delta)+e\E[e^s|s\leq \delta]\delta \nonumber 
\right) -O(\frac{1}{u}) \\
&= \frac{e}{e^{\delta}(e-1)} \left(
\frac{e-e^\delta}{1-\delta} (1-\delta)+
\frac{e(e^\delta-1)}{\delta}\delta \right)-O(\frac{1}{u})\nonumber \\
&=e- O(\frac{1}{u}),
\end{align*}
which completes the proof.
\end{proof}

\begin{proof}[Proof of Theorem~\ref{thm:randomized.performance}]
To evaluate the consistency, note that $R_{\delta,a}$ locates $\hat{u}$ at iteration $j$, regardless of the random outcome of $s$. Therefore, 
\[
\E[\cost(R_{\delta,a} \hat{u})]=\lambda \E[\frac{a^{j+1+s}-1}{a-1}] \leq \lambda a^{j+1} 
\frac{a-a^\delta}{(1-\delta)\ln a},
\]
which establishes the consistency of $R_{\delta,a}$, since $\hat{u}=\lambda a^{j+\delta}$.

To bound the robustness, let $u$ denote an arbitrary target, and let $\mu \in [0,1)$ be such that $u=\lambda a^{k+\mu}$, where $k \in \mathbb{N}$. We consider two cases:

\noindent
{\em Case 1: $\mu \leq \delta$}: In this case, $R_{\delta,a}$ locates $u$ at iteration $k$. Hence, similarly to the calculations above, we obtain that
\[
\rob(R_{\delta,a}) = \sup_{\mu \in (0,\delta]}\frac{\E[\cost(R_{\delta,a}, u)]}{\lambda a^{k+\mu}} \leq \sup_{\mu \leq \delta} \frac{a^{k+1}(a-a^\delta)}{a^{\mu+k}(a-1)(1-\delta)\ln a} =
\frac{a(a-a^\delta)}{(a-1)(1-\delta)\ln a }.
\]

\noindent
{\em Case 2: $\mu > \delta$}. In this case, $R_{\delta,a}$ locates the target either at iteration $k$, if $s\geq \mu$, or at iteration $k+1$, otherwise. Hence
\begin{align*}
\rob(R_{\delta,a})&=\frac{1}{\lambda a^{k+\mu}}\left( \E[\lambda \frac{a^{k+s+1}}{a-1}|s\geq \mu] \Pr(s\geq \mu)+
\E[\lambda \frac{a^{k+s+2}}{a-1}|s< \mu] \Pr(s<\mu)\right) \nonumber \\
&=\frac{1}{(a-1)(1-\delta)\ln a}(a-a^\mu+a(a^\mu-a^\delta)) \nonumber \\
&\leq
\frac{a(a-a^\delta)}{(a-1)(1-\delta)\ln a }.
\end{align*}
\end{proof}

\begin{proof}[Proof of Theorem~\ref{thm:e-comp}]
Let $G$ denote the cdf of $D_\varepsilon$. We also  define $x_{-1}=1$, for initialization purposes.
Since any target drawn from $D_{\varepsilon}$ is bounded by $R$, we can assume that $X$ is of the form $X=(x_i)_{i=0}^n$, for some $n$ such that $x_n=R$. We have that
\begin{align*}
\E_{u \sim D_\epsilon}
[\frac{\textrm{cost}(X,u)}{u}] 
& =
\sum_{i=0}^n \int_{x_{i-1}}^{x_i}
\sum_{j=0}^i x_i \frac{1}{y} dG(y)
\tag{by definition} 
\\
&= \sum_{i=0}^n x_i \int_{x_{i-1}}^R \frac{1}{y} dG(y) 
\tag{rearranging terms} 
\\
&= \varepsilon \sum_{i=0}^n \frac{x_i}{x_{i-1}}
\tag{
$\int_{x_{i-1}}^R \frac{1}{y} \, dG(y) = \varepsilon \left( \int_{x_{i-1}}^R \frac{1}{y^2} \, dy + \frac{1}{R} \right)
$}
\\
&\geq \varepsilon (n+1) \left( \prod_{i=0}^{n} \frac{x_i}{x_{i-1}}
\right)^\frac{1}{n+1}
\tag{AM-GM inequality}
\\
&= (1-\varepsilon) 
\frac{R^{\frac{1}{n+1}}}{\ln R^{\frac{1}{n+1}}} 
\tag{from the definition of $R$ and telscoping product}
\\
&\geq (1-\varepsilon)e.
\tag{since $x/\ln x$ is minimized at $x=e$}
\end{align*}
\end{proof}

\begin{proof}[Proof of Lemma~\ref{thm:main.stochastic}]
Let us denote by $\rho_i$ the ratio $x_{i}/x_{i-1}$ and by $m$ the ratio $1/n$. 

From~\eqref{eq:rob.stoch} it follows that
\begin{equation}
\E[\rob(X)+\lambda \cons(X)] = 
\varepsilon \sum_{i=0}^n \rho_i +\lambda+\frac{\lambda}{\rho_n} =
\varepsilon \sum_{i=0}^{n-1} \rho_i  +\varepsilon \rho_n +\lambda +\frac{\lambda}{\rho_n}.
\label{eq:complex.1}
\end{equation}
For further convenience of notation, let us denote $a:=\rho_n$, then from the AM-GM inequality and a reasoning along the lines of~\eqref{eq:rob.stoch}, Eq.~\eqref{eq:complex.1} gives 
\begin{align}
\E[\rob(X)+\lambda \cons(X)] &\geq 
(1-\varepsilon) \frac{x_{n-1}^{\frac{1}{n}}}{\ln R^\frac{1}{n}} +\epsilon a+\lambda+\frac{\lambda}{a} \nonumber \\
&=(1-\varepsilon) T\frac{1}{a^\frac{1}{n}}+\epsilon a +\lambda+\frac{\lambda}{a} 
\label{eq:complex 2}
\end{align}
by recalling that $T$ denotes the quantity
$\frac{R^{\frac{1}{n}}}{\ln R^\frac{1}{n}}$.

Define $f(a):=(1-\varepsilon) T \frac{1}{a^\frac{1}{n}}+\varepsilon a +\frac{\lambda}{a}$. We will prove that as $\varepsilon \to 0$, $f$ has a minimum that is arbitrarily close to $f(a^*)$, where
\begin{equation}
a^*=\frac{m(1-\varepsilon)}{\varepsilon}T.
\label{eq:a*}
\end{equation}

To see why this suffices to complete the proof, note first that 
\[
R^m \leq e^{F(T)},
\]
from the definition of the function $F$ in the statement of the theorem. From the definition of $R$, the above inequality implies that
\begin{equation}
m \leq \frac{\epsilon}{1-\varepsilon} F(T),
\label{eq:m}
\end{equation}
and note that  
\begin{equation}
\lim_{\varepsilon \to 0} m=0.
\label{eq:limit.m}
\end{equation}
From~\eqref{eq:a*} and~\eqref{eq:m} it will then follow that
\[
a^*\leq T\cdot F(T).
\]

Moreover, as $\varepsilon \to 0$, $f(a)$ is arbitrarily close to the function
\[
g(a):=(1-\varepsilon) T +\varepsilon a +\frac{\lambda}{a},
\]
hence $f(a^*)$ is arbitrarily close to $g(a^*)$, and from~\eqref{eq:complex 2} this completes the proof. Note that $\delta(\epsilon)$ can be defined as a function that is comprised by infinitesimally small terms, due to the approximations that we will introduce in what follows. 

Next, we find the extreme points of $g(a)$. The derivative of $g$ with respect to $a$ is equal to
\[
g'(a)=-(1-\varepsilon)mTa^{-m-1}+\varepsilon-\frac{\lambda}{a^2}.
\]
To find the local extreme points, we solve $g'(a)=0$. Since $g'$ is continuous, as $\varepsilon \to 0$, local extrema are arbitrarily close to the solution of the equation
\[
-(1-\varepsilon)mT\frac{1}{a}+\varepsilon-\frac{\lambda}{a^2}=0.
\]
The solution $a^*$ of this quadratic equation, considering that $a^*>0$ is
\[
a^*=\frac{(1-\varepsilon)mT+\sqrt{((1-\varepsilon)mT)^2+4\varepsilon\lambda}}{2\varepsilon},
\]
and note that as $\varepsilon \to 0$, $a^*$ approaches $(1-\varepsilon)mT/\varepsilon$, hence establishing~\eqref{eq:a*}. Last, this value corresponds to the unique maximum of $g$. This is because
$g''(a)=m(1-\epsilon)T \frac{1}{a^2}+\frac{\lambda}{a^3}>0$.
\end{proof}

\begin{proof}[Proof of Theorem~\ref{thm:final.randomized.bound}]
By way of contradiction, suppose that there exists a randomized $r$-robust strategy $X$ with consistency at most $C=1+1/(r F(r))$, then for every $\lambda>0$ 
\begin{equation}
\rob(X)+\lambda\cons(X) \leq r+\lambda C.
\label{eq:comb.upper}
\end{equation}
Define 
\[
S=\frac{R^{\frac{1}{n+1}}}{\ln R^{\frac{1}{n+1}}},
\]
then recalling the definition of $T$, we have 
\begin{equation}
T \leq R^{\frac{1}{n(n+1)}} S.
\label{eq:fuckthisshit}
\end{equation}

Let $\xi>0$ be any fixed constant. We consider two cases:

\medskip
\noindent
{\em Case 1:} $S\geq (1+\xi)r$. In this case, from~\eqref{eq:rob.stoch} we have that $\E[\rob(X))\geq (1+\xi)r$. This, however, contradicts Yao's principle from~\eqref{eq:comb.upper} and any sufficiently small value of $\lambda$. 

\medskip
\noindent
{\em Case 2:} $S< (1+\xi)r$. 
From~\eqref{eq:fuckthisshit} and~\eqref{eq:limit.m} we obtain that $T$ is arbitrarily close, but smaller than $S$, as $\epsilon \to 0$. For sufficiently large $\lambda$, e.g. $\lambda=r^2$, 
Lemma~\ref{thm:main.stochastic} 
yields
\begin{align}
\E[\rob(X))+\lambda \cons(X)] &\geq 
(1-\varepsilon)T+r^2 \left(1+\frac{1}{T F((T)}\right) \nonumber 
\\  &\geq 
(1-\varepsilon)(1+\xi)r +r^2\left(1+\frac{1}{(1+\xi)r F((1+\xi)r)}\right), \nonumber
\end{align}
where the second inequality follows from the monotonicity of the RHS, assuming that $T<r^2$ (otherwise, the lower bound follows trivially).
\end{proof}

\section{Proofs and omitted details from Section~\ref{sec:applications}}
\label{app:further}

\subsection{Contract scheduling with dynamic predictions}
\label{app:dynamic}

We show how to compute, inductively, each updated schedule in the dynamic setting. Recall that the fist schedule, $X_1$, can be computed as the Pareto-optimal schedule for prediction $\hat{u}_1$~\cite{DBLP:journals/jair/AngelopoulosK23}. Let $X_{m}$ denote the schedule computed for prediction $\hat{u}_{m}$; from the discussion of the setting in Section~\ref{subsec:dynamic}, $X_{m}$ begins the execution of its first contract at time $T_{m-1}$.

We describe how to compute schedule $X_m$, for prediction $\hat{u}_m$. From Proposition~\ref{prop:configs}, any $r$-robust schedule can complete at most $O(\log \hat{u}_m)$ contracts by time $\hat{u}_m$. Therefore, it suffices to solve the following family of LPs $P_j$, for $j \in O(\log \hat{u}_m)$:

\begin{align}
 P_{\bf{j}} :=  \textrm{max} \quad & 
 x_j
 \label{LP:obj-d}\\
    \textrm{subj. to} \quad & 
    x_{i}\leq r x_{i-1}-\left(T_{m-1}+\sum_{k=0}^{i-1}x_k\right),  i \in [1,j] \label{LP:comp-d} 
    \\
    & T_{m-1}+\sum_{i=0}^{j} x_{i} \leq \hat{u}_m \label{LP:cons-d} \\
    & T_{m-1}+\sum_{i=0}^{j} x_{i} \leq 
    \frac{r+\sqrt{r(r-4)}}{2} x_{j}
    \label{LP:aggr-d} \\
    & x_{i} \leq x_{i+1},  i \in [0, j-1]
    \label{LP:monotone-d} \\
& x_0 \leq r \cdot {\tt last}(X_{m-1})-T_{m-1}.
\label{LP:init-d}
\end{align}

The objective of the LP is to maximize $x_j$: here, $j$ is the index of the largest contract that completes by time $\hat{u}_m$. Maximizing $x_j$ minimizes the consistency of the schedule. Constraint~\eqref{LP:comp-d} guarantees the $r$-robustness of the schedule: this is because the worst-case interruptions occur right before the completion of a contract~\cite{RZ.1991.composing}. Note that the time starts at $T_{m-1}$, which is the time that $\hat{u}_m$ is available and the first contract of $X_m$ will be scheduled. Constraint~\eqref{LP:cons-d} states that $x_j$ is indeed completed by time $\hat{u}_m$. Constraint~\eqref{LP:aggr-d} guarantees that the final schedule can be $r$-extendable, and constraint~\ref{LP:monotone-d} captures the monotonicity of the contracts. Last,~\eqref{LP:init-d} is an initialization constraint. Here, the notation ${\tt last}(X_{m-1})$ denotes the last contract that was executed in $X_{m-1}$ before $\hat{u}_m$ was issued, i.e., the last contract executed before $T_{m-1}$. 

To complete the computation of $X_m$, after solving the above family of LPs and computing a partial schedule of the form $(x^*_0,\ldots x^*_{j^*})$ (for the optimal index $j^*$), it suffices to augment this partial schedule with its tight $r$-extension, as in step 3 of Algorithm~\ref{algo:pareto}.

\subsection{Linear search}
\label{app:linear.search}

We discuss how the approach of Sections~\ref{sec:pareto} and~\ref{sec:randomized} can be applied to the linear search problem.

\subsubsection{Pareto-optimal search}

Let us assume the convention that the left half-line (i.e., the half line to the left of $O$) is mapped to the negative reals, whereas the right halfline is mapped to the positive reals. A search strategy $X$ can be described as a sequence of the form 
\[
X=((-1)^{i+\delta}\cdot x_i)_{i\geq 0},  
\quad \textrm{where $x_i>0$ and  $\delta\in\{0,1\}$}.
\]
Consider a target hiding at the point  $(-1)^{\delta'}u$ where $\delta'\in\{0,1\}$. Then $X$ discovers the target at cost 
$$\cost(X,u)=u+2\sum_{j=0}^{i-1} x_j: x_{i-2}< u \leq x_i \text{ ~and~ } (-1)^{\delta+i}=(-1)^{\delta'},$$
where the factor 2 accounts for the searcher moving away from $O$, then returning to $O$, in each iteration.

With the above definitions in place, 
we can define configurations in a manner similar to Definition~\ref{def:configuration}. For a given prediction $\mu=(\mu_1,p_1),\dots,(\mu_k,p_k)$ with $\mu_i\in\mathbb R$, we first list the  orderings in which a search strategy may find each $\mu_i$, which are at most $2^k$. Indeed, initially and after each $\mu_i$ is found by the searcher, there are at most two possibilities for the next $\mu_j$: the one directly to the right of $\mu_i$ or to its left. Let $\sigma$ be such a valid ordering, so $X$ first finds $\mu_{\sigma(1)}$, then $\mu_{\sigma(2)}$ until $\mu_{\sigma(k)}$. A configuration of $X$ according to $\mu$ and $\sigma$ is defined as the vector $(j_1,\dots, j_k)\in\mathbb N^k$ such that, for all $i\in[1,k]$:
\begin{align*}
(-1)^{j_i-1+\delta} x_{l} < \mu_{\sigma(i)} \leq (-1)^{j_i+1+\delta} x_{j_i+1} \quad \text{if $\mu_{\sigma(i)}>0$}\\
(-1)^{j_i-1+\delta} x_{l} > \mu_{\sigma(i)} \geq (-1)^{j_i+1+\delta} x_{j_i+1} \quad \text{if $\mu_{\sigma(i)}<0$}.
\end{align*}

Furthermore, we note that a strategy is $r$-robust if and only if for all $i$ it holds that
$$ x_{i+1} \leq \frac{r-1}{2} \cdot x_{i} - \sum_{j\leq i} x_j.$$
Note that the definition is matching the one for bidding, by replacing $r$ with  $\rho:=\frac{r-1}{2}$. Hence,  the tight $r$-extension of a strategy has the same properties as in Lemma~\ref{lemma:aggr.extendable}.
Therefore, Lemma~\ref{lemma:ext.condition} holds as well, by replacing $r$ with $\rho$. It follows that a partial strategy $Y=((-1)^{i+\delta}y_i)_{i=0}^\ell$ is $r$-robust if and only if it is partially $r$-robust and $\sum_{i=0}^\ell \frac{y_i}{y_\ell}\leq \frac12\rho+\frac12\sqrt{\rho(\rho-4)}$. 

We can therefore design an algorithm analogous to Algorithm~\ref{algo:pareto}, which first lists all possibilities for $\delta\in\{0,1\}$, $\sigma$, and the configurations. Then, it finds the best corresponding $r$-robust partial solution which can be $r$-extended, if it exists. This computation is done through a family of LPs similar to that for the bidding problem:

\begin{align*}
 P_{\bf{j},\sigma,\delta} :=  \textrm{min} \quad & \sum_{i=1}^k p_i\cdot \left( |\mu_{\sigma(i)}| + 2 \sum_{q=0}^{j_i}   x_{q}\right) \\
    \textrm{subj. to} \quad & 
    x_{i}\leq \rho \cdot  x_{i-1}-\sum_{j=0}^{i-1}x_j,  i \in [1,j_k+1] 
    \\
    & \sum_{i=0}^{j_k+1} x_{i} \leq 
    \frac{\rho+\sqrt{\rho(\rho-4)}}{2} x_{j_k+1}
    \\
    & (-1)^{\delta+j_i-1}x_{j_i-1}\leq\mu_{\sigma(i)}\leq x_{j_{i}+1} , i \in [1, k] \text{~and~} \mu_{\sigma(i)}>0
    \\
    & (-1)^{\delta+j_i-1}x_{j_i-1}\geq\mu_{\sigma(i)}\geq -x_{j_{i}+1} , i \in [1, k] \text{~and~} \mu_{\sigma(i)}<0
    \\
    & x_{i} \leq x_{i+2},  i \in [0, j_k]\\
    & x_i \geq 0, i \in [0, j_{k+1}]\\
    & x_0 \leq r.
\end{align*}

\subsubsection{Randomized search}

We now show how to extend the approach of Section~\ref{sec:randomized} to the problem of linear search. 

First, for the standard competitive analysis without predictions, we note that there are two different ways of defining and analyzing randomized strategies of optimal competitive ratio. The first concerns the setting with the assumption that the target cannot hide at a distance less than unit from $O$ (otherwise no constant-competitive strategies exist). Here, efficient strategies are of the form\footnote{This setting is analyzed in [Kao, Ming-Yang, John H. Reif, and Stephen R. Tate. "Searching in an unknown environment: An optimal randomized algorithm for the cow-path problem." Information and computation 131.1 (1996): 63-79]. The work is a rediscovery of an original result due to Gal~\cite{Gal80}.} $X=((-1)^b a^{i+s})_{i=0}^\infty$, where $b$ is a random bit in $\{0,1\}$, and $s$ is chosen u.a.r in $[0,1]$. The second setting assumes that the search can start with infinitesimally small oscillations around $O$: this ``bi-infinite'' setting does not require a random starting choice between the left and right halflines, and an efficient strategy if of the form $X=((-1)^i a^{i+s})_{i=-\infty}^{+\infty}$, where $s$ is chosen u.a.r. in $[0,2]$. Both settings lead to the same optimal competitive ratio of
\begin{equation}
q:=1+\min_{a>1} \frac{1+a}{\ln a} \approx 4.6.
\label{eq:search.comp}
\end{equation}

For convenience, in what follows we will assume the bi-infinite model, under which the cost expressions are easier to derive.

\paragraph{Upper bound}
We first obtain a randomized strategy, parameterized over $\delta \in [0,2]$ and $a>1$. Let $\hat{u}$ denote the predicted target, and let $j \in \mathbb{N}, \lambda \in [1,a)$ be such that $|\hat{u}|=\lambda a^{j+\delta}$. Suppose also, without loss of generality that $\hat{u}$ is in the right halfline. We define $R_{\delta,a}$ as the randomized strategy 
$((-1)^{d+i})\lambda a^{i+s})_i$, where $s$ is chosen u.a.r. in $[\delta,2)$, and $d$ has the same parity as $j$.

\begin{theorem}
$R_{\delta,a}$ has consistency 
at most
$1+2\frac{a^2-a^\delta}{a^\delta(2-\delta)(a-1)\ln a}$
and robustness at most
$1+2\frac{a^2-a^{\delta}}{(a-1)(2-\delta)\ln a}$.
\label{thm:search.randomized.upper}
\end{theorem}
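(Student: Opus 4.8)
The plan is to mirror the analysis of Theorem~\ref{thm:randomized.performance}, adapting it to the two-dimensional geometry of linear search. The key structural difference is that each bid now incurs an additive cost of $|\hat u|$ (the final leg to the target) plus twice the sum of all prior oscillation lengths, which accounts for the factor $2$ and the additive $+1$ appearing in both bounds. The random shift $s$ is drawn uniformly from $[\delta,2)$ rather than $[0,1)$, reflecting the period-$2$ structure of the alternating sign sequence $((-1)^{d+i})$; this changes the normalization from $(1-\delta)$ to $(2-\delta)$ and replaces the single geometric step $a$ with $a^2$ in the numerators, since the searcher must traverse two iterations to return to the same halfline.

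First I would handle the consistency. Since $d$ is chosen to have the same parity as $j$, the strategy is guaranteed to reach the predicted target's halfline at iteration $j$ regardless of the outcome of $s$. Writing $\cost(R_{\delta,a},\hat u) = |\hat u| + 2\sum_{q<i} x_q$ for the relevant iteration $i$, I would compute $\E[\cost]$ by integrating over $s \in [\delta,2)$, bounding the sum of oscillation lengths by the dominant geometric term $\lambda a^{j+1}$ times the appropriate integral. Dividing by $\hat u = \lambda a^{j+\delta}$ and separating the leading additive $1$ (from the $|\hat u|/|\hat u|$ term) from the oscillation contribution yields the claimed bound $1 + 2\frac{a^2-a^\delta}{a^\delta(2-\delta)(a-1)\ln a}$, where the factor $a^2$ arises because the closest completed pass in the target's halfline before reaching it is two steps back.

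For the robustness, I would follow the case analysis of Theorem~\ref{thm:randomized.performance}. Fix an arbitrary target at distance $u = \lambda a^{k+\mu}$ in either halfline, and split on whether the parity and the value of $\mu$ relative to $\delta$ force discovery at iteration $k$ or $k+1$ (in the correct halfline). In each case I would express $\E[\cost(R_{\delta,a},u)]/u$ as an integral over the random shift, again isolating the additive $1$ contributed by the final approach leg and bounding the oscillation sum by a geometric series. Taking the supremum over $\mu \in [0,1)$ (or the relevant subinterval) and over the choice of halfline, the worst case is attained at the boundary, giving $1 + 2\frac{a^2-a^\delta}{(a-1)(2-\delta)\ln a}$; note this differs from the consistency only by the missing $a^\delta$ factor in the denominator, exactly as in the bidding case.

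The main obstacle I anticipate is the bookkeeping induced by the two halflines and the parity constraint on $d$. Unlike bidding, where every target lies ``to the right'' and the strategy is monotone, here the searcher alternates sides, so the set of oscillations completed before reaching a given target depends on both the sign of the target and the parity of the iteration at which it is found. I would need to verify carefully that in the worst case the relevant geometric sum telescopes to a clean multiple of $a^2$, and that the bi-infinite model assumption removes any boundary effects at the origin so that the supremum is genuinely attained in the interior of a period. Once the correct iteration index is pinned down in each case, the remaining calculations are routine integrals of exponentials in $s$, entirely analogous to those in the proof of Theorem~\ref{thm:randomized.performance}.
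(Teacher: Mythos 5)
Your plan is correct, and for the consistency bound it coincides with the paper's proof: since $d$ has the parity of $j$, the target $\hat u$ is discovered at iteration $j$ for every outcome of $s$, the bi-infinite sum of prior oscillations telescopes exactly to $\lambda a^{j+s}/(a-1)$ (no boundary terms, as you note), and taking $\E[a^s]$ with $s\sim U[\delta,2)$ yields $1+2\frac{a^2-a^\delta}{a^\delta(2-\delta)(a-1)\ln a}$ --- the $a^2$ in the numerator enters simply through $\E[a^s]=\frac{a^2-a^\delta}{(2-\delta)\ln a}$, i.e.\ because the shift ranges up to $2$, which is the cleaner way of saying what you attribute to ``two steps back.'' For the robustness, however, the paper takes a shortcut you do not: instead of redoing the case analysis of Theorem~\ref{thm:randomized.performance}, it invokes the competitive analysis of Lemma~8.5 in~\cite{searchgames}, which already identifies the worst-case target (just beyond a turning point, least favorably aligned with the support of the shift) and gives the bound as the single integral $1+\frac{2a^2}{a-1}\int_\delta^2\frac{a^{x-2}}{2-\delta}\,dx$ (the factor $2$ is missing from the paper's displayed intermediate line but is restored in its final expression, as the sanity check $\delta=0$ recovering $1+\frac{1+a}{\ln a}$ confirms). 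Your from-scratch case analysis computes the same quantity and is more self-contained, but one detail of your setup must be repaired: the alternating strategy revisits a given halfline only every \emph{two} iterations, so the correct parameterization is $u=\lambda a^{k+\mu}$ with offset $\mu\in[0,2)$ and $k$ matching the target's parity, and the dichotomy is discovery at iteration $k$ versus $k+2$, not $k$ versus $k+1$ as you write. Equivalently, the discovery exponent exceeds $k+\mu$ by $(s+c)\bmod 2$ for a target-dependent shift $c$, and $\E\left[a^{(s+c)\bmod 2}\right]$ with $s\sim U[\delta,2)$ is maximized at $c=0$, which substantiates your claim that the supremum is attained at the boundary and yields exactly $1+\frac{2}{a-1}\E[a^s]$, matching the stated bound. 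You flag this parity bookkeeping yourself as the main obstacle; once it is set up as above, the remaining computation closes precisely as you describe, so I would count your robustness argument as a correct, more elementary unpacking of the citation the paper uses.
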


\begin{proof}
By definition, the strategy discovers $\hat{u}$ at iteration $j$, hence its consistency is bounded as follows:
\begin{align*}
\cons(R_{\delta,a}) &= 1+2\frac{a^j}{a-1} \frac{1}{a^{j+\delta}} \E[e^s] \nonumber \\
&= 1+2 \frac{1}{(a-1)a^\delta} \frac{a^2-a^\delta}{(2-\delta)\ln a} \nonumber \\
&=1+2\frac{a^2-a^\delta}{a^\delta(2-\delta)(a-1)\ln a}.
\end{align*}

To bound the robustness, we can generalize the competitive analysis of the randomized $q^*$-competitive strategy. More precisely, following the lines of the proof of Lemma 8.5 in~\cite{searchgames} we obtain that
\begin{align*}
\rob(R_{\delta,a}) 
&= 1+\frac{a^2}{a-1}\int_\delta^2 \frac{1}{2-\delta}a^{x-2} dx \nonumber \\
&= 1+2\frac{a^2-a^{\delta}}{(a-1)(2-\delta)\ln a}.
\end{align*}
\end{proof}


As in the case of bidding, $R_{\delta,a}$ interpolates between two extreme strategies. Specifically, for $\delta=0$, it has consistency and robustness that are both equal to $1+(1+a)/\ln a$, which has a unique minimizer $q^*$, as in~\eqref{eq:search.comp}. In this case, the strategy is equivalent to the competitively optimal one. 

In the other extreme, for $\delta \to 2$, simple limit evaluations show that $\rob(R_{2,a})=1+2a^2/(a-1)$, whereas $\cons(R_{2,a})=1+2/(a-1)$. From the study of deterministic search strategies~\cite{DBLP:journals/iandc/Angelopoulos23}, this is the best possible consistency/robustness tradeoff that can be attained by a deterministic strategy. 

\paragraph{Lower bound}
We show how to extend the approach of Section~\ref{subsec:lower.randmized} to the linear search problem. Recall that, without predictions, the optimal competitive ratio is expressed by~\eqref{eq:search.comp}. For the proof that $q^*$ is optimal, we refer to Theorem 8.6 in~\cite{searchgames}, which is based on a distribution on targets with a hyperbolic pdf very similar to the online bidding case: the main difference is that targets may hide in either half-line, so the pdf is divided by a factor of 2, to reflect this. Using the same distribution on targets, and a prediction for a target at distance $\hat{u}=R$, we have the following expressions on the robustness and the consistency of a deterministic strategy 
$X=((-1)^i x_i)_{i=0}^n$. 
\begin{align}
\rob(X) &=1+\varepsilon n +\sum_{i=0}^{n-1} \frac{x_i}{x_{i-1}} \nonumber \\
&\geq 
1+\varepsilon n +\varepsilon n
\prod_{i=0}^{n-1} \frac{x_i}{x_{i-1}}  \nonumber \\
&=
1+\varepsilon n+\varepsilon n R^{1/n}
\nonumber \\
&=
1+\varepsilon n +\varepsilon \frac{R^{1/n}}{\ln R^{1/n}} \nonumber \\
&=
1+(1-\varepsilon) \frac{1+R^{1/n}}{\ln R^{1/n}}.
\label{eq:linear.robust}
\end{align}
and
\begin{equation}
\cons(X)=\frac{1}{R}(R+2x_{n-2})=1+2\frac{x_{n-2}}{x_{n-1}}.
\label{eq:linear.cons}
\end{equation}
The above expressions are the counterpart of~\eqref{eq:rob.stoch} and~\eqref{eq:cons.stoch}, respectively, for the linear search. Eq~\eqref{eq:linear.robust} follows directly from the proof of~\cite{searchgames}, whereas~\eqref{eq:linear.cons} follows by assuming, without loss of generality, that the searcher locates the target at distance $R$ at the $n-1$-th iteration. This is because the target will be located either at the $n$-th or the $(n-1)$-th iteration, and the latter minimizes the cost, hence the consistency as well. 

With this setup in place, we obtain the following lemma, whose proof follows the same lines as Lemma~\ref{thm:main.stochastic}.

\begin{lemma}
Given $\varepsilon>0$, let
$T$ denote the quantity $\frac{R^{\frac{1}{n-1}}}{\ln R^{\frac{1}{n-1}}}$. For any $\lambda>0$, it holds that
\[
\rob(X)+\lambda \cdot \cons(X)\geq 1+\varepsilon n + (1-\varepsilon)T+ \lambda(
1+\frac{1}{TF(T)})-\delta(\varepsilon)
\]
where $F(T)=\ln (T(\ln T+\ln \ln T))$, and $\lim_{\varepsilon \to 0} \delta(\varepsilon)=0$.
\label{lemma:search.main.stochastic}
\end{lemma}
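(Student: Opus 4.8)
The plan is to follow the proof of Lemma~\ref{thm:main.stochastic} almost verbatim, the only structural differences being the additive constant $1+\varepsilon n$ coming from~\eqref{eq:linear.robust}, the shift of the relevant exponent from $1/n$ to $m:=1/(n-1)$, and the factor $2$ attached to the consistency term in~\eqref{eq:linear.cons}. Writing $\rho_i=x_i/x_{i-1}$ and combining~\eqref{eq:linear.robust} with~\eqref{eq:linear.cons}, I would first record the identity
\[
\rob(X)+\lambda\cons(X)=1+\varepsilon n+\lambda+\varepsilon\sum_{i=0}^{n-1}\rho_i+\frac{2\lambda}{\rho_{n-1}}.
\]
I then isolate the last ratio by setting $a:=\rho_{n-1}$ and applying the AM--GM inequality to the remaining $n-1$ ratios $\rho_0,\dots,\rho_{n-2}$. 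Their product telescopes to $x_{n-2}/x_{-1}=x_{n-2}=R/a$ (using the initialization $x_{-1}=1$ and the normalization $x_{n-1}=R$ already encoded in~\eqref{eq:linear.cons}), so that $\sum_{i=0}^{n-2}\rho_i\ge (n-1)(R/a)^{m}$.

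Next, exactly as in the bidding case, I would use $R=\exp((1-\varepsilon)/\varepsilon)$ to rewrite $R^{m}=\frac{m(1-\varepsilon)}{\varepsilon}T$, which turns the AM--GM term into $\varepsilon(n-1)(R/a)^{m}=(1-\varepsilon)T\,a^{-m}$. This reduces the problem to lower bounding the scalar function $f(a):=(1-\varepsilon)T\,a^{-m}+\varepsilon a+2\lambda/a$ over $a>0$, since $\rob(X)+\lambda\cons(X)\ge 1+\varepsilon n+\lambda+f(a)$. Differentiating $f$ and solving $f'(a)=0$ gives, in the limit $\varepsilon\to 0$, the critical point $a^{*}=\frac{m(1-\varepsilon)}{\varepsilon}T$ (the factor $2$ only perturbs the lower-order term under the square root and vanishes in the limit), which is the unique minimizer since $f''>0$.

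The decisive step, shared with Lemma~\ref{thm:main.stochastic}, is the bound $a^{*}\le T\,F(T)$: from the definition of $F$ one has $R^{m}\le e^{F(T)}$, equivalently $m\le\frac{\varepsilon}{1-\varepsilon}F(T)$, and substituting this into the expression for $a^{*}$ yields $a^{*}\le T\,F(T)$. Here $R^{m}\le e^{F(T)}$ is precisely the estimate $y\le T(\ln T+\ln\ln T)$ for the solution $y=R^{m}$ of $y=T\ln y$, which is the technical heart of the argument. Finally, since $a^{-m}\to 1$ as $\varepsilon\to 0$, the value $f(a^{*})$ is within an additive $\delta(\varepsilon)\to 0$ of $g(a^{*}):=(1-\varepsilon)T+\varepsilon a^{*}+2\lambda/a^{*}$; dropping the nonnegative term $\varepsilon a^{*}$ and using $a^{*}\le T\,F(T)$ gives $g(a^{*})\ge (1-\varepsilon)T+\frac{2\lambda}{T\,F(T)}$.

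Combining the above yields $\rob(X)+\lambda\cons(X)\ge 1+\varepsilon n+(1-\varepsilon)T+\lambda\bigl(1+\tfrac{2}{T\,F(T)}\bigr)-\delta(\varepsilon)$, which is in fact stronger than the claimed bound (with factor $1$ in place of $2$); the factor $2$ may simply be relaxed to match the statement. The main obstacle is controlling the error term $\delta(\varepsilon)$ uniformly --- that is, justifying both that the true minimizer of $f$ converges to $a^{*}$ and that $a^{-m}\to 1$ fast enough that replacing $f(a^{*})$ by $g(a^{*})$ is legitimate --- but this is handled identically to the bidding case, using $\lim_{\varepsilon\to 0}m=0$ and the bound $a^{*}\le T\,F(T)$ to keep $m\ln a^{*}\to 0$.
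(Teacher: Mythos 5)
Your proof is correct and is essentially the paper's own argument: the paper provides no separate proof of this lemma, saying only that it ``follows the same lines as Lemma~\ref{thm:main.stochastic}'', and your adaptation --- carrying the additive $1+\varepsilon n$ from~\eqref{eq:linear.robust}, isolating $a=\rho_{n-1}$ and applying AM--GM to the remaining $n-1$ ratios so that the exponent shifts to $m=1/(n-1)$ (which is exactly why the lemma's $T$ is defined with $R^{1/(n-1)}$), and reusing the minimizer bound $a^{*}\le T\,F(T)$ via $R^{m}\le e^{F(T)}$ together with the same $\delta(\varepsilon)$ error control --- is precisely that intended adaptation, at the same level of rigor as the paper's bidding proof. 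Your strengthened constant $\lambda\bigl(1+\frac{2}{T\,F(T)}\bigr)$, arising from the factor $2$ in~\eqref{eq:linear.cons}, is moreover exactly what the paper uses downstream in Theorem~\ref{thm:search.final.randomized.bound} (whose bound carries the factor $2$), so the factor $1$ in the lemma's statement is evidently just a safe weakening (or typo) and your version is the one the paper actually needs.
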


Last, Lemma~\ref{lemma:search.main.stochastic} allows us to show the following lower bound.

\begin{theorem}
Any randomized learning-augmented algorithm
of robustness at most $r$ has consistency at least
$
1 + 2\frac{1}{(1+ \xi) y F(y)},
$
where $y=e^{F(\frac{r-1}{e})}$,
for any arbitrarily small, but constant $\xi>0$.
\label{thm:search.final.randomized.bound}
\end{theorem}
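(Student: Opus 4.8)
The plan is to transcribe the proof of Theorem~\ref{thm:final.randomized.bound} into the linear-search setting, replacing its three ingredients by the search counterparts already established: the robustness and consistency identities~\eqref{eq:linear.robust} and~\eqref{eq:linear.cons}, and the scalarized lower bound of Lemma~\ref{lemma:search.main.stochastic}. First I would argue by contradiction, assuming a randomized $r$-robust strategy of consistency at most $C:=1+\frac{2}{(1+\xi)yF(y)}$, so that $\rob(X)+\lambda\,\cons(X)\le r+\lambda C$ holds for every weight $\lambda>0$. By the minimax theorem applied to the target distribution $D_\varepsilon$ (here supported symmetrically on both half-lines, with halved density) and the prediction $\hat u=R$, the left-hand side is bounded below by the scalarized cost of a best-response deterministic strategy $X=((-1)^i x_i)_{i=0}^n$, for which Lemma~\ref{lemma:search.main.stochastic} supplies an explicit lower bound. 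A contradiction will follow by choosing $\lambda$ appropriately.

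The argument then splits into two regimes according to the magnitude of the robustness term $\hat S:=\frac{1+R^{1/n}}{\ln R^{1/n}}$ in~\eqref{eq:linear.robust}, compared against the threshold $(1+\xi)(r-1)$; the shift from $r$ to $r-1$ encodes the leading ``$1+$'' of~\eqref{eq:linear.robust}, i.e.\ the inescapable cost $|h|$ of reaching the hider. If $\hat S\ge (1+\xi)(r-1)$, then $\rob(X)=1+(1-\varepsilon)\hat S>r$ for all sufficiently small $\varepsilon$, and taking $\lambda$ small contradicts $\rob(X)+\lambda\,\cons(X)\le r+\lambda C$, precisely as in Case~1 of the bidding proof. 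Otherwise $\hat S<(1+\xi)(r-1)$, and I would take $\lambda$ large (e.g.\ $\lambda=r^2$) and apply Lemma~\ref{lemma:search.main.stochastic}.

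The heart of the proof is converting this robustness bound into the stated consistency bound. As $\varepsilon\to 0$ one has $1/n\to 0$ (by the analogue of~\eqref{eq:limit.m}) and $R^{1/(n(n-1))}\to 1$, so $T=\frac{R^{1/(n-1)}}{\ln R^{1/(n-1)}}$ is asymptotically $\frac{R^{1/n}}{\ln R^{1/n}}$ and, crucially, $T\,F(T)\approx R^{1/n}$, since $R^{1/(n-1)}\le e^{F(T)}$ by the same recurrence-inversion that underlies the proof of Lemma~\ref{lemma:search.main.stochastic}. Hence the additive consistency term $\frac{1}{T F(T)}$ of the lemma is essentially $\frac{1}{R^{1/n}}$, and the constraint $\hat S\le (1+\xi)(r-1)$ bounds $R^{1/n}$ from above by inverting the map $a\mapsto \frac{1+a}{\ln a}$; this inversion is exactly what outputs the $e^{F(\cdot)}$ form, with argument proportional to $r-1$, thereby producing $y$. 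Folding in the factor $2$ of the consistency identity~\eqref{eq:linear.cons} then yields $\cons(X)\ge 1+\frac{2}{(1+\xi)yF(y)}$, the desired contradiction.

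I expect the main obstacle to be the constant bookkeeping in this final conversion: simultaneously accounting for the additive $1$ in the search robustness, the factor $2$ in the search consistency, and the two distinct exponents $1/n$ and $1/(n-1)$, while keeping the vanishing error $\delta(\varepsilon)$ and the slack $\xi$ controlled so that the inverted expression consolidates into exactly $y=e^{F(\frac{r-1}{e})}$ rather than a looser surrogate. As in the bidding case, I would close by invoking monotonicity in $r$ of the resulting bound to absorb the $(1+\xi)$ factors and to present the statement in the clean form claimed.
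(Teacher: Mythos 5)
Your proposal follows essentially the same route as the paper's own (sketched) proof: it likewise transplants the bidding argument of Theorem~\ref{thm:final.randomized.bound} — the scalarized contradiction $\rob(X)+\lambda\,\cons(X)\le r+\lambda C$, Yao's principle over the symmetric hyperbolic target distribution with prediction $\hat u=R$, Lemma~\ref{lemma:search.main.stochastic} with large $\lambda$ (e.g.\ $\lambda=r^2$), and a case split on whether $1+\frac{1+R^{1/n}}{\ln R^{1/n}}\le r$, which is exactly your threshold $\hat S\lessgtr(1+\xi)(r-1)$ — so the structure is a faithful match. The single step you defer as ``constant bookkeeping,'' namely consolidating the inversion of $a\mapsto\frac{1+a}{\ln a}$ into the exact form $y=e^{F(\frac{r-1}{e})}$, is precisely the step the paper's sketch also asserts without derivation (its ``i.e.\ based on whether $S\le(1+\xi)e^{F(\frac{r-1}{e})}$''), so your attempt matches the paper in both approach and level of detail.
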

\begin{proof}[Proof sketch]
The proof is analogously to that of  Theorem~\ref{thm:main.stochastic}, with the difference that $S$ and $T$ are defined slightly differently, namely 
\[
S=\frac{R^{\frac{1}{n}}}{\ln R^{\frac{1}{n}}} \quad 
\textrm{ and} \quad
T=\frac{R^\frac{1}{n-1}}{\ln R^{\frac{1}{n}}}
\]
Furthermore, from~\eqref{eq:linear.robust}, the cases are defined relative to whether 
\[
1+\frac{1+R^\frac{1}{n}}{\ln R^\frac{1}{n}} \leq r,
\]
i.e. based on whether 
$S \leq (1+\xi) e^{F(\frac{r-1}{e})}$.
\end{proof}


\section{Further experimental results and discussion}
\label{app:exp}

\begin{figure}[t!]
    \centering
    \subfloat[]{\includegraphics[width=0.5\linewidth]{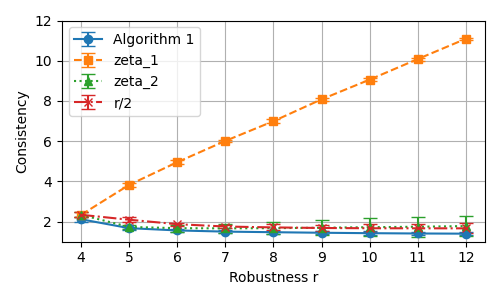}}
    \subfloat[]{\includegraphics[width=0.5\linewidth]{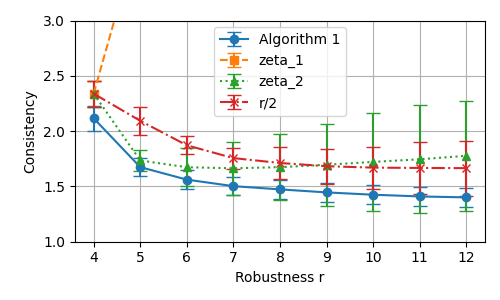}}
    \caption{Empirical consistency as a function of the robustness requirement $r$, where $\mu_i$ and $p_i$ are drawn i.i.d. following uniform distributions.
     Fig.(a) depicts the same values as (b), but  has a wider $y$-axis so as to illustrate the relative performance of the heuristic with base $\zeta_1
    $.}
    \label{fig:plot_uu} 
\end{figure}

\begin{figure}[t!]
    \centering
    \subfloat[Uniform $\mu_i$]{\includegraphics[width=0.5\linewidth]{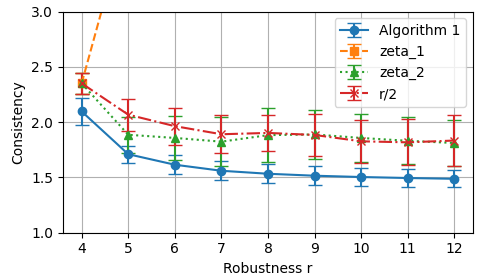}}
    \subfloat[Gaussian $\mu_i$, $\sigma=4000$]{\includegraphics[width=0.5\linewidth]{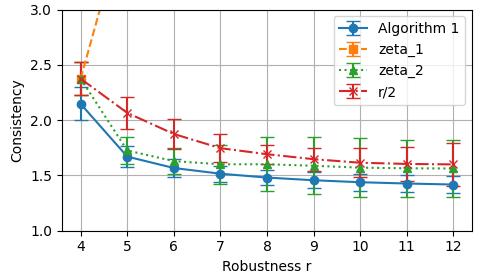}}\\
    \subfloat[Gaussian $\mu_i$, $\sigma=2000$]{\includegraphics[width=0.5\linewidth]{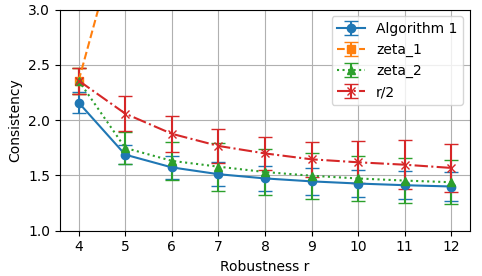}}
    \caption{Empirical consistency as a function of the robustness requirement $r$,  when $p_i$ are equal, and the $\mu_i$ are drawn from the stated distributions.}
    \label{fig:plot_eq}
\end{figure}

In this section, we report additional experimental results.

Figure~\ref{fig:plot_uu} depicts the empirical consistency of the algorithms for the dataset where $\mu_i$ and $p_i$ are drawn from uniform distributions defined in Section~\ref{sec:exp}.
Figure~\ref{fig:plot_uu}(a) highlights the empirical consistency of the heuristic $\zeta_1$. The plot shows that the consistency increases near-linearly with $r$, i.e., this heuristic is very inefficient. Hence we focus on the comparison to other algorithms. The results depicted in Figure~\ref{fig:plot_uu}(b) are comparable to the ones in the main paper for a spread Gaussian distribution. The standard deviation of the heuristic results is larger, as seen in the error bars, which is attributed to the fact that the uniform distribution has larger variance than the gaussian.

Figure~\ref{fig:plot_eq} depicts experimental results for the remaining three datasets, in which all points have the same probability mass. Compared to datasets where $p_i$ are chosen i.i.d, the performance gap between the heuristics and Algorithm~\ref{algo:pareto} is slightly smaller. The standard deviation of the heuristics is likewise smaller, since the i.i.d. predictions involve more randomness than the equiprobable predictions.

We conclude with a result that supports the experimental findings reported at the end of Section~\ref{sec:applications}. Namely, we will show that for a large class of geometric heuristics, which includes those with bases $\zeta_2$ and $r/2$, the gap between their consistency and that of our algorithm becomes arbitrarily large, as the robustness $r$ increases. 

Formally, we say that a geometric algorithm is \emph{$r$-increasing} if it is of the form $\{\lambda\cdot \rho^i\}_{i\geq 0}$ where $\rho$ depends only on $r$, tends to $\infty$ as $r \to \infty$, and $\lambda$ may depend on $r$, $\rho$ and $\mu$. By definition, this class contains the heuristics with bases such as $r/2$, and $\zeta_2$.

Proposition~\ref{lem:geom_adv} shows that $r$-increasing geometric algorithms may have an arbitrarily large consistency, even on a very simple prediction $\mu$ composed of two points. This is in contrast to Algorithm~\ref{algo:pareto}, whose consistency tends to 1, as $r \to \infty$. 

\FloatBarrier

\begin{proposition}
    \label{lem:geom_adv}
    For any $r$-increasing geometric algorithm, and any $R\geq 1$, there exists a robustness requirement $r$ and a prediction distribution $\mu$ over two points such that the consistency of the algorithm over $\mu$ is at least $R$, whereas the consistency of Algorithm~\ref{algo:pareto} is at most $1+1/R$.
\end{proposition}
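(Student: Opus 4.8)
The plan is to build, for the given $r$-increasing geometric algorithm $A=\{\lambda\rho^i\}$ and target $R$, a two-point instance on which $A$ must overpay on its high-probability point while Algorithm~\ref{algo:pareto} pays essentially nothing extra. I would take $\mu_1=1$ and $\mu_2=\rho^{1/2}$ (so the two points sit within a single multiplicative factor $\rho$), with probabilities $p_2=\tfrac{1}{2R}$ and $p_1=1-p_2$. Since $A$ is $r$-increasing, $\rho=\rho(r)\to\infty$, so I fix $r$ large enough that $\rho\ge 4R^2$, hence $\rho^{1/2}\ge 2R$.

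For the lower bound on $A$ I would exploit the rigidity of the geometric grid. Write $\phi\in[0,1)$ for the phase, i.e. $\log_\rho\lambda \bmod 1$; then the first bid of $A$ that is at least $\mu_i$ is $\mu_i\rho^{o_i}$, where the logarithmic overshoot $o_i=o_i(\phi)\in[0,1)$ satisfies $o_i\equiv\phi-\log_\rho\mu_i\pmod 1$. Consequently $\cost(A,\mu_i)\ge \mu_i\rho^{o_i}$ and $\E[\cost]\ge L(\phi):=p_1\rho^{o_1}+p_2\rho^{1/2}\rho^{o_2}$. Because $\log_\rho(\mu_2/\mu_1)=\tfrac12$, the overshoots are coupled by $o_1\equiv o_2+\tfrac12\pmod 1$, so no phase can make both small at once: aligning a bid to $\mu_2$ ($o_2=0$) forces $o_1=\tfrac12$, i.e. $\cost(A,\mu_1)\ge\rho^{1/2}=\mu_2$. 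Minimizing $L(\phi)$ over the two ranges $o_2\in[0,\tfrac12)$ and $o_2\in[\tfrac12,1)$ yields $\min_\phi L=\min\!\big(\rho^{1/2},\,p_1+p_2\rho\big)$, and the choice $\rho\ge 4R^2$, $p_2=\tfrac1{2R}$ puts us in the regime where this minimum equals $\rho^{1/2}$ (it reduces to the inequality $\rho^{1/2}\ge 2R-1$). Thus $\E[\cost]\ge\rho^{1/2}$ for \emph{every} phase, and since $\E_{z\sim\mu}[z]=p_1+p_2\rho^{1/2}$, the consistency of $A$ is at least $\tfrac{\rho^{1/2}}{p_1+p_2\rho^{1/2}}$, which a direct computation shows is $\ge R$ precisely when $\rho^{1/2}\ge 2R-1$.

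For the upper bound on Algorithm~\ref{algo:pareto} it suffices to exhibit one feasible $r$-robust strategy of low consistency, since Theorem~\ref{thm:pareto.main} guarantees optimality. I would use the strategy that bids $x_0=1=\mu_1$, jumps to $x_1=\mu_2=\rho^{1/2}$, and is completed by its tight $r$-extension. The jump is $r$-robust because $x_1=\rho^{1/2}\le\rho\le\zeta_2<r-1$ (any $r$-robust geometric base satisfies $\rho\le\zeta_2$, and $\zeta_2<r-1$), and the partial strategy $(1,\rho^{1/2})$ is $r$-extendable by Lemma~\ref{lemma:ext.condition} since its ratio $\tfrac{1+\rho^{1/2}}{\rho^{1/2}}=1+\rho^{-1/2}\le\zeta_2$. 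Here $\cost(\mu_1)=1$ and $\cost(\mu_2)=1+\rho^{1/2}$, so the consistency of this strategy is $\tfrac{1+p_2\rho^{1/2}}{p_1+p_2\rho^{1/2}}=1+\tfrac{p_2}{p_1+p_2\rho^{1/2}}\le 1+\rho^{-1/2}\le 1+\tfrac1R$, and Algorithm~\ref{algo:pareto} can only do at least as well.

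The main obstacle is the lower bound: unlike Algorithm~\ref{algo:pareto}, the geometric algorithm may tune its phase $\lambda$ after seeing $\mu$, so I must rule out \emph{every} phase rather than a fixed one. The crux is the modular coupling $o_1\equiv o_2+\tfrac12\pmod 1$, which is exactly what forbids $A$ from cheaply separating $\mu_1$ and $\mu_2$; the only remaining subtlety is verifying that the phase-minimization lands on the branch $\min_\phi L=\rho^{1/2}$ rather than $p_1+p_2\rho$, which is the clean inequality $\rho^{1/2}\ge 2R-1$. A secondary remark handles the degenerate regime where one does not assume $\rho\le\zeta_2$ (so $A$ need not itself be $r$-robust): the one-step pareto jump is then replaced by a rate-$\zeta_2$ geometric sub-sequence reaching $\mu_2$ with multiplicative overshoot $\tfrac{\zeta_2}{\zeta_2-1}=1+o_r(1)$, which keeps the pareto consistency below $1+\tfrac1R$ for $r$ large while leaving the lower bound untouched.
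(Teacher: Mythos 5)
Your lower bound is correct, and is in fact a sharper argument than the paper's: the modular coupling $o_1\equiv o_2+\tfrac12 \pmod 1$ with the exact phase minimization $\min_\phi L=\min(\rho^{1/2},\,p_1+p_2\rho)$ rules out every translate of the geometric grid, whereas the paper avoids phase analysis altogether by a coarser sparsity argument: it fixes $\mu_2=\Delta=2R$ \emph{independently of $\rho$}, takes $\rho,r\geq\Delta^2$, and observes that since $\rho\geq\mu_2^2$, either some bid lands in $[\mu_1,\mu_2)$ (forcing the next bid, hence the cost at $\mu_2$, to be at least $\Delta^2$, contributing $p_2\Delta^2=\Delta$) or no bid does (forcing cost at least $\mu_2=\Delta$ on every target). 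Both routes give expected cost $\geq 2R$ against $\E_{z\sim\mu}[z]<2$.

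The genuine gap is in your upper bound, precisely in the ``secondary remark.'' Your main argument needs $\rho^{1/2}\leq r-1$ for the one-jump partial strategy $(1,\rho^{1/2})$ to be partially $r$-robust, and you obtain this from $\rho\leq\zeta_2$ --- but the paper's definition of an $r$-increasing algorithm does \emph{not} require $r$-robustness, so $\rho(r)$ may exceed $(r-1)^2$, and since you tied the instance to the algorithm via $\mu_2=\rho^{1/2}$, this regime is unavoidable for, say, $\rho(r)=4(r-1)^2$. There, the patch you sketch fails: a rate-$\zeta_2$ geometric subsequence can be aligned to hit $\mu_2$ exactly (giving cost $\leq\tfrac{\zeta_2}{\zeta_2-1}\mu_2$ there), but then it has no bid at $\mu_1=1$, and the cost at $\mu_1$ can be as large as roughly $\zeta_2\approx r$, paid with probability $p_1\approx 1$; with $\rho^{1/2}=\Theta(r)$ the resulting consistency is $\approx\frac{p_1\zeta_2+p_2\mu_2}{p_1+p_2\mu_2}=\Theta(R)$, not $1+1/R$, and no choice of large $r$ escapes this. (Aligning the subsequence to start at $1$ instead loses a multiplicative overshoot of up to $\zeta_2$ at $\mu_2$, which is equally fatal.) The remark is fixable --- e.g., use a tuned base $\zeta'=\mu_2^{1/m}$ with $m=\lceil\log_{\zeta_2}\mu_2\rceil$, so that $\zeta'\in[\zeta_1,\zeta_2]$ and the bids $1,\zeta',\dots,\zeta'^m=\mu_2$ hit \emph{both} points exactly, yielding consistency $1+O(1/\zeta')\leq 1+O(r^{-1/2})$ --- but as written the case is not proved. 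The paper's choice of $\mu_2=2R$ independent of $\rho$, with $r\geq 4R^2$ so that $\mu_2\leq r-1$ always, sidesteps this case split entirely, which is what makes its proof shorter despite the cruder lower-bound analysis.
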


\begin{proof}
    Consider an $r$-increasing geometric algorithm, with base $\rho$ (which is a function of $r$). Let $\Delta = 2R$ and $r$ be such that $\rho \geq \Delta^2$ and $r\geq \Delta^2$; these choices are possible, by the definition of $r$-increasing strategies. Let $\mu$ be defined by $(\mu_1,\mu_2)=(1,\Delta)$ and $(p_1,p_2)=(1-1/\Delta,1/\Delta)$.
    Since $\rho\geq\Delta^2$, there are only two possibilities for the sequence of bids $(x_i)_{i\geq 0}$ output by the algorithm: either there is a bid $x_i$ in $[\mu_1,\mu_2)$ or not. In the first case, $x_{i+1}$ is at least $\rho\mu_1\geq\Delta^2$, and the expected cost of the algorithm is at least  $p_2\cdot \Delta^2= \Delta$, by considering only the contribution of $\mu_2$ in the expected cost. In the second case, the first bid $x_j$ that is larger than $\mu_1$ is at least $\mu_2$, hence the expected cost of the algorithm over $\mu$ is at least $\mu_2=\Delta$. In both cases, the expected cost of the algorithm is at least $\Delta=2R$.

    Consider now a partial strategy whose first two bids are equal to $\mu_1$ and $\mu_2$, respectively. This partial strategy is $r$-extendable, as $r\geq \Delta^2$ and the conditions of Lemma~\ref{lemma:ext.condition} are satisfied. Thus, there exists an $r$-robust strategy of expected cost equal to $(1-\frac 1\Delta)\mu_1 + \frac 1\Delta (\mu_1+\mu_2) = 1-\frac1\Delta +\frac 1\Delta + 1 = 2 $. Since  $\E_{z\sim \mu}[z]=2-\frac 1\Delta$, the above strategy has consistency at most $1+1/R$, and so does Algorithm~\ref{algo:pareto} since it is Pareto-optimal. In contrast, the consistency of $X$ is at least $\frac{2R}{2-1/\Delta} \geq R$.
\end{proof}

The above results shows that any geometric strategy will have inferior performance: Either it is $r$-increasing, thus inefficient according to Proposition~\ref{lem:geom_adv} or it is not, in which case it is inefficient even on single-point predictions. The latter follows from~\cite{angelopoulos2024online}, which showed that for single-point predictions the consistency is equal to $\rho/(\rho-1)$.

\end{document}